\newcommandx{\xxRightarrow}[1]
  {\xRightarrow
  {\protect{
  \raisebox{ -0.3pt }[0pt][0pt]{ \ensuremath{ \scriptscriptstyle{ #1 } }}}
  }
}
\setlist{nolistsep}
\pgfplotsset{compat=1.17}
\definecolor{blue1}{RGB}{70, 150, 190}
\definecolor{blue5}{RGB}{0, 0, 75}
\tikzset{%
  dashes/.style args={#1}{%
    dash pattern=on #1 off #1
  }
}
\newtheoremstyle{mystyle}
  {}
  {}
  {\itshape}
  {}
  {\bfseries}
  {.}
  { }
  {\thmname{#1}\thmnumber{ #2}\thmnote{ (#3)}}
\theoremstyle{mystyle}
\newtheorem{regtheorem}{Theorem}
\newtheorem{posttheorem}{Theorem}
\newtheorem*{mytheorem}{Theorem}
\newtheorem{mycor}{Corollary}
\newtheorem{mylemma}{Lemma}
\newtheorem{myclaim}{Claim}
\newtheorem{myfact}{Fact}
\newcommand{\bra}[1]{\left<#1\right|}
\newcommand{\ket}[1]{\left|#1\right>}
\DeclareMathOperator{\tr}{tr\!}
\DeclareMathOperator{\id}{\mathds{1}}
\newcommand{\h}{\mathcal{H}}
\newcommand{\K}{\mathcal{K}}
\crefname{regtheorem}{theorem}{theorems}    
\Crefname{regtheorem}{Theorem}{Theorems}
\crefname{posttheorem}{theorem}{theorems}    
\Crefname{posttheorem}{Theorem}{Theorems}
\title{Notes on distinguishability of postselected computations}
\author{Zuzana Gavorov\'a}
\affil{School of Computer Science and Engineering,
The Hebrew University of Jerusalem, Jerusalem, Israel}
\date{}
\begin{document}
\linespread{1.05}
\lineskiplimit=-5pt
\maketitle
\begin{abstract}

    The framework of postselection is becoming more and more important in various recent directions in Quantum Computation research. Postselection renders simple computational models able to perform general quantum computation. This was first observed for the linear optics model [E. Knill, R. Laflamme, G. J. Milburn, Nature 409, 46 (2001)], and has since provided us with many near-term candidates for the quantum advantage, commuting computations [M. J. Bremner, R. Jozsa, D. J. Shepherd, Proc. R. Soc. A 467, 459 (2011)] being the first. To facilitate the discussion of errors in the presence of postselection, we define and characterize trace-induced distance and diamond distance of postselected computations. We show counterexamples to simple properties that one would expect of any distance measure; the properties of convexity (when considering only the pure-state inputs would suffice), contractivity, and subadditivity of errors. On the positive side, we prove that certain weaker versions of contractivity and subadditivity and a number of other properties are preserved in the postselected setting. We achieve this via a "conversion lemma" that translates any inequality from the standard to the postselected setting.
\end{abstract}
\maketitle

\section{Introduction}

The postselected setting has recently drawn a lot of attention. Postselection renders linear optics universal for quantum computation \cite{knill2001scheme} and larger gate errors tolerable in fault-tolerant quantum computation \cite{knill2005quantum,reichardt2006error,aliferis2008accuracy}.
Arguments using postselection \cite{aaronson2005quantum} provide the evidence for quantum supremacy, i.e. that we cannot simulate classically certain quantum computations: commuting quantum computations \cite{bremner2011classical,bremner2017achieving}, boson sampling \cite{aaronson2011computational}, one clean qubit DQC1 computations \cite{morimae2014hardness}, random circuits \cite{bouland2019complexity}.  These computations are, moreover, quite simple and believed to be achievable in the near future - by noisy intermediate term quantum (NISQ) devices.

In this note we define postselection equivalents of trace-induced distance and diamond distance. We pick some known inequalities between the standard distance measures, also adding a new one, and prove that equivalent relations hold in the postselected setting. 

The difficulty is that in the postselected setting it only makes sense to compare computations after their output has been renormalized. Let $ L(\h)$ be the set of linear operators from the finite-dimensional\footnote{All Hilbert spaces throughout this note are finite-dimensional.} Hilbert space $\h$ to itself. We say that a map $\Phi: L(\h)\to L(\h')$ is a {\it postselection superoperator} if 
\begin{enumerate*}[label=(\arabic*)]
\item it is linear, completely positive (CP), trace-nonincreasing and 
\item its postselection probability $\tr\left[{\Phi}(\rho)\right]$ is nonzero for all $\rho\in\mathcal{D}(\h)$,
\end{enumerate*} where $\mathcal{D}(\h)\subset L(\h)$ is the set of normalized density operators. A postselection superoperator $\Phi$ followed by the renormalisation of the output corresponds to the map
\begin{equation}
    \rho\mapsto\frac{{\Phi}(\rho)}{\tr\left[{\Phi}(\rho)\right]}\text{,}\label{map_nonlinear}
\end{equation}

\noindent which introduces a non-linearity in $\rho$. The distances of such maps break certain relations: we will see counterexamples to convexity, contractivity and subadditivity. However, under certain conditions translation of some standard inequalities to the postselected setting is possible. We achieve this via our "conversion lemma", which states that under these conditions the postselection probability is almost constant in $\rho$, so that the nonlinear map \eqref{map_nonlinear} can be replaced by a linear one.

Errors in the postselected setting were previously investigated \cite{beverland2020lower,aaronson2019online,gao2015quantum,aaronson2006qma} for some specific applications requiring a version of subadditivity (also called union bound). We take the more general approach of deriving postselection equivalents of the following well-established results on diamond distance.

\subsection{Preceding work: 
Two inequalities to carry over to the postselected setting
}

Kitaev \cite{kitaev1997quantum} defined diamond distance $d_\diamondsuit$ and Aharonov, Kitaev and Nisan proved \cite[Theorem 4]{mixed_states_aharonov98} the following property of $d_\diamondsuit$, useful when evaluating how well a computation composed of subroutines simulates an ideal computation composed of ideal subroutines:

\begin{regtheorem}[due to \cite{mixed_states_aharonov98}: Subadditivity of $\boldsymbol{d_\diamondsuit}$]\label{lem:chaining_ah}
	Let the simulating computation use $N$ subroutines, each at most $\epsilon$-far in $d_\diamondsuit$ from its ideal. Then the simulating computation is at most $N\epsilon$-far in $d_\diamondsuit$ from the ideal computation.
\end{regtheorem}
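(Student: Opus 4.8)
The plan is to reduce the $N$-subroutine statement to a single hybrid-argument telescoping, using the two defining properties of the diamond norm: that it is a genuine metric (so the triangle inequality holds) and that it is \emph{stable under tensoring with an identity channel and under composition with arbitrary channels}. These are exactly the structural properties that make $d_\diamondsuit$ the "right" distance for composed computations, so the proof should be essentially a bookkeeping argument once those two facts are in hand.

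First I would set up notation. Let the ideal computation be $\Psi = \Psi_N\circ\Psi_{N-1}\circ\cdots\circ\Psi_1$ and the simulating computation be $\Phi = \Phi_N\circ\Phi_{N-1}\circ\cdots\circ\Phi_1$, where each pair satisfies $d_\diamondsuit(\Phi_i,\Psi_i)\le\epsilon$. The goal is $d_\diamondsuit(\Phi,\Psi)\le N\epsilon$. The standard device is a chain of hybrid computations $H_k = \Phi_N\circ\cdots\circ\Phi_{k+1}\circ\Psi_k\circ\cdots\circ\Psi_1$, so that $H_0=\Psi$ and $H_N=\Phi$, and consecutive hybrids $H_{k-1}$ and $H_k$ differ only in whether the $k$-th subroutine is ideal or simulated.

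Next I would bound each consecutive difference by $\epsilon$. The two hybrids share a common "prefix" $P_k=\Psi_{k-1}\circ\cdots\circ\Psi_1$ and a common "suffix" $S_k=\Phi_N\circ\cdots\circ\Phi_{k+1}$, so that $H_{k-1}=S_k\circ\Psi_k\circ P_k$ and $H_k=S_k\circ\Phi_k\circ P_k$. The key step is to invoke that $d_\diamondsuit$ does not increase under pre-composition with any channel $P_k$ and under post-composition with any channel $S_k$; combined with the diamond norm's defining stability under tensoring with the identity on an arbitrary reference system, this yields $d_\diamondsuit(H_{k-1},H_k)\le d_\diamondsuit(\Psi_k,\Phi_k)\le\epsilon$. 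Finally, applying the triangle inequality along the chain gives
\begin{equation}
d_\diamondsuit(\Phi,\Psi)=d_\diamondsuit(H_N,H_0)\le\sum_{k=1}^{N}d_\diamondsuit(H_k,H_{k-1})\le N\epsilon\text{,}
\end{equation}
which is the claim.

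The main obstacle I anticipate is not the telescoping itself but the justification of the composition-nonexpansiveness step, specifically the post-composition bound. Pre-composition stability is immediate from the definition (feeding a transformed input changes nothing in the maximization over inputs), but showing that appending a later channel $S_k$ cannot \emph{increase} the diamond distance is precisely the content of the stability/contractivity property of $d_\diamondsuit$ under channel composition; I would want to either cite this as a known property or derive it from the fact that the diamond norm is defined via a supremum over all ancilla-assisted inputs, so that any post-processing channel can only shrink a trace-distance and the supremizing input for the composed channels is already available to the original channels. The remaining care is purely notational: keeping the prefix/suffix indices aligned so that $H_{k-1}$ and $H_k$ really do differ in exactly one slot.
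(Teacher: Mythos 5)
The paper states this theorem only as a citation to \cite{mixed_states_aharonov98} and gives no proof of its own; your hybrid/telescoping argument is precisely the standard proof from that reference, and it is correct --- the post-composition step you flag as the main obstacle follows from the fact that a trace-nonincreasing CP map $S_k$ satisfies $\left|\!\left|S_k\right|\!\right|_\diamondsuit\leq 1$ together with submultiplicativity of the diamond norm (equivalently, \Cref{lem:contract} applied to $S_k\otimes\mathcal{I}_\K$). The only slip is notational: with your definition of $H_k$ you get $H_0=\Phi$ and $H_N=\Psi$, not the reverse, but this does not affect the telescoping.
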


The next result from Watrous \cite{watrous2018theory} relates $d_\diamondsuit$ to what we call the {\it operational} trace-induced distance $d_{tr}^\mathcal{D}$ of two linear trace-nonincreasing CP maps $\Phi, \Psi: L(\h)\to L(\h')$:
\begin{equation*}
    d_{tr}^\mathcal{D}(\Psi, \Phi):=\sup_{\rho\in\mathcal{D(\h)}}\left|\!\left|\Psi(\rho)-\Phi(\rho)\right|\!\right|_{tr}\text{,}
\end{equation*}
\noindent where $\left|\!\left|\cdot\right|\!\right|_{tr}$ is the trace norm. This is not the same as the usual trace-induced distance $d_{tr}$, but $d_\diamondsuit$ is the stabilized version of both $d_{tr}$ and $d_{tr}^\mathcal{D}$ (see \Cref{sec:prelim}).
Watrous \cite[Theorem 3.56]{watrous2018theory} proved that under a certain condition upper bounding the stabilized version is 'for free':

\begin{regtheorem}[due to \cite{watrous2018theory}: $\boldsymbol{d_{tr}}$ small $\boldsymbol{\xxRightarrow{U} d_\diamondsuit}$ small]\label{lem:diam_watrous}
    Let $U:\h\to \h'$ be an isometry, $\Psi: L(\h)\to L(\h')$ a linear CP trace-preserving map.
    If $d_{tr}^\mathcal{D}(\Psi, U\cdot U^\dagger)\leq\epsilon$, then $d_\diamondsuit(\Psi, U\cdot U^\dagger)\leq\sqrt{2\epsilon\,}$.
\end{regtheorem}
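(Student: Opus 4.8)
The plan is to reduce to pure inputs by stabilization, exploit that the isometry channel $U\cdot U^\dagger$ turns a pure input into a \emph{pure} output, and convert the resulting trace distance into a single overlap that the hypothesis controls. First I would invoke that $d_\diamondsuit$ is attained on pure bipartite inputs, writing $d_\diamondsuit(\Psi,U\cdot U^\dagger)=\sup_{|\psi\rangle}\bigl\|(\Psi\otimes\mathrm{id})(|\psi\rangle\langle\psi|)-(U\otimes\mathrm{id})|\psi\rangle\langle\psi|(U\otimes\mathrm{id})^\dagger\bigr\|_{tr}$ over unit vectors $|\psi\rangle\in\mathcal H\otimes\mathcal W$. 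Setting $\sigma=(\Psi\otimes\mathrm{id})(|\psi\rangle\langle\psi|)$ and $|\phi\rangle=(U\otimes\mathrm{id})|\psi\rangle$, the isometry term is the pure state $|\phi\rangle\langle\phi|$, so the Fuchs--van de Graaf inequality gives $\|\sigma-|\phi\rangle\langle\phi|\|_{tr}\le 2\sqrt{1-\langle\phi|\sigma|\phi\rangle}$. The whole problem thus becomes a lower bound on the overlap $\langle\phi|\sigma|\phi\rangle$, and a bound of the form $1-\langle\phi|\sigma|\phi\rangle\le\tfrac12\epsilon$ will produce exactly $\sqrt{2\epsilon}$.

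Next I would observe that, the target being pure, this overlap depends on $|\psi\rangle$ only through its reduced state $\rho=\tr_{\mathcal W}|\psi\rangle\langle\psi|$, so it is genuinely a single-system quantity. For \emph{product} inputs, where $\rho=|\chi\rangle\langle\chi|$ is pure, it equals $\tr[U\rho U^\dagger\,\Psi(\rho)]$, the overlap of the output with the pure target $U\rho U^\dagger$. Here the key trick is that $P:=U\rho U^\dagger$ is a rank-one projector and $X:=U\rho U^\dagger-\Psi(\rho)$ is traceless, so $1-\langle\phi|\sigma|\phi\rangle=\tr[PX]\le\tr[X_+]=\tfrac12\|X\|_{tr}\le\tfrac12\epsilon$; the factor $\tfrac12$, which is exactly what separates $\sqrt{2\epsilon}$ from the weaker $2\sqrt{\epsilon}$, comes for free from $0\le P\le I$ and $\tr X=0$. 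This disposes of all product (pure-marginal) inputs, in fact with room to spare.

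The hard part will be entangled inputs, i.e.\ mixed $\rho$. Then $U\rho U^\dagger$ is no longer a projector, and, worse, the stabilized overlap is strictly smaller than the non-stabilized fidelity $F(\Psi(\rho),U\rho U^\dagger)$: for a channel that fixes the maximally mixed state while scrambling coherences (a Pauli-type channel compared with the identity) the overlap can drop to $0$ on a maximally entangled input even though $\Psi(\rho)=U\rho U^\dagger$ there, so the naive witness $\tau=\rho$ is useless and one cannot argue input-by-input. This is where the isometry hypothesis is essential. I would first pull the comparison back to $\mathcal H$ by replacing $\Psi$ with the CP trace-nonincreasing map $\tilde\Psi=U^\dagger\Psi(\cdot)U$ — the hypothesis survives because conjugating by $U^\dagger$ cannot increase the trace norm, so $\|\tilde\Psi(\tau)-\tau\|_{tr}\le\epsilon$ for all $\tau$ and the problem reduces to the case $U=I$ — and then bound $1-\langle\phi|\sigma|\phi\rangle=1-\langle\psi|(\tilde\Psi\otimes\mathrm{id})(|\psi\rangle\langle\psi|)|\psi\rangle$ by aggregating the non-stabilized distances at the eigenvectors of $\rho$ and their pairwise superpositions, via a polarization/convexity estimate. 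Establishing that such an aggregate of single-system bounds still controls the entangled overlap — equivalently, that the isometry structure prevents entanglement from amplifying distinguishability beyond the non-stabilized budget — is the crux of the argument and the source of the extra factor of $2$.
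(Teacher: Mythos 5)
First, a point of reference: the paper does not actually prove \Cref{lem:diam_watrous}; it imports it from Watrous (Theorem 3.56 of his book). The only neighbouring statement proved here is \Cref{lem:diam}, and it is obtained by a genuinely different route --- via \Cref{lem:op} one shows the Stinespring operator $A$ of $\Psi$ is $2\sqrt{\epsilon}$-close in operator norm to $U\otimes\ket{g}$, and then one applies $d_\diamondsuit(A\cdot A^\dagger, B\cdot B^\dagger)\leq 2\left|\!\left|A-B\right|\!\right|_{op}$ together with contractivity of the partial trace --- at the price of the weaker constant $4\sqrt{\epsilon}+\epsilon$. Your fidelity-based strategy is the right one in spirit (it is essentially Watrous's own), but it cannot be checked against anything in this paper, so it must stand on its own.

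It does not: the proposal stops exactly where the proof begins. The reduction to pure bipartite inputs, the Fuchs--van de Graaf step $\left|\!\left|\sigma-\ket{\phi}\!\!\bra{\phi}\right|\!\right|_{tr}\leq 2\sqrt{1-\bra{\phi}\sigma\ket{\phi}}$, and the product-input bound $1-\bra{\phi}\sigma\ket{\phi}=\tr[PX]\leq\tfrac12\left|\!\left|X\right|\!\right|_{tr}\leq\tfrac{\epsilon}{2}$ are all correct, but they only handle the easy case. For entangled inputs you announce that one should ``aggregate'' the single-system bounds at the eigenvectors of $\rho$ and their superpositions ``via a polarization/convexity estimate'' and you yourself call this the crux --- that is naming the hard step, not performing it. Concretely, writing $\bra{\phi}\sigma\ket{\phi}=\sum_{j,k}p_jp_k\bra{Ux_j}\Psi(\ket{x_j}\!\!\bra{x_k})\ket{Ux_k}=\sum_a\left|\tr(B_a\rho)\right|^2$ with $B_a=U^\dagger K_a$ for Kraus operators $K_a$ of $\Psi$, the quantity to be bounded from below is a \emph{convex} function of $\rho$, so its minimum need not sit at pure states and no convexity argument applies; and the natural implementation of the polarization you sketch (applying the pure-state bound to $(x_j\pm x_k)/\sqrt2$, $(x_j\pm i x_k)/\sqrt{2}$) controls the cross terms only to accuracy $O(\sqrt{\epsilon})$, which after the final square root yields a diamond-norm bound of order $\epsilon^{1/4}$, not $\sqrt{\epsilon}$. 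Your own accounting is also inconsistent with the target constant: reaching $\sqrt{2\epsilon}$ requires $1-\bra{\phi}\sigma\ket{\phi}\leq\epsilon/2$ for \emph{all} inputs, so the ``extra factor of $2$'' you budget for the entangled case would already degrade the conclusion to $2\sqrt{\epsilon}$. To close the argument you must either supply Watrous's actual lemma relating the stabilized overlap to the worst-case pure-state overlap for isometry targets, or fall back on the paper's Stinespring route and accept the constant of \Cref{lem:diam}.
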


\noindent The condition is that one of the superoperators corresponds to an isometry. In \Cref{lem:diam_watrous} and throughout we indicate this condition by the superscript $U$ on the implication symbol.

\subsection{Overview of the results}

\subsubsection*{A new inequality for the standard setting}

We add an inequality that uses Stinespring dilation to relate $d_{tr}^\mathcal{D}$ to the operator norm $\left|\!\left|\cdot\right|\!\right|_{op}$. (The fact that $d_{tr}^\mathcal{D}(A\cdot A^\dagger,B\cdot B^\dagger) \leq 2\left|\!\left|A-B\right|\!\right|_{op}$ if $\left|\!\left|A\right|\!\right|_{op}, \left|\!\left|B\right|\!\right|_{op}\leq 1$ follows from Lemma 12.6 of \cite{mixed_states_aharonov98}, here we prove an opposite bound.) Recall that by Stinespring dilation any linear CP map $\Psi: L(\h)\to L(\h')$ can be written as $\Psi(\cdot)=\tr_{\,\K'}\left[A\cdot A^\dagger\right]$ for some (non-unique) linear operator $A:\h\to\h'\otimes\K'$.

\newtheorem*{lemun}{\Cref{lem:op}}
\begin{lemun}[$\boldsymbol{d_{tr}^\mathcal{D}}$ small $\boldsymbol{\xxRightarrow{U}}$ $\boldsymbol{\left|\!\left|\cdot\right|\!\right|_{op}}$ small) (roughly] 
Let $U\!:\h\to \h'$ be an isometry and $\Psi: L(\h)\to L(\h')$ a linear CP trace-nonincreasing map. Denote by $A:\h\to\h'\otimes\K'$ some Stinespring-dilation operator of $\Psi$. If $d_{tr}^\mathcal{D}(\Psi, U\cdot U^\dagger)\leq\epsilon$, then there exists $\ket{g}\in\K'$ such that $\left|\!\left|A-U\otimes \ket{g}\right|\!\right|_{op}\leq 2\sqrt{\epsilon}$.
\end{lemun}

\noindent To prove \Cref{lem:op} we first prove that $d_{tr}(\Psi, \Phi) \leq 2 d_{tr}^\mathcal{D}(\Psi, \Phi)$ for any pair of linear maps $\Psi, \Phi: L(\h)\to L(\h')$ (\Cref{lem:trace_dists}). A corollary of \Cref{lem:op} is a new version of \Cref{lem:diam_watrous} with the trace-preserving assumption on $\Psi$ relaxed (\Cref{lem:diam}).

\subsubsection*{The postselected setting}

We define postselection equivalents of $d_{tr}^\mathcal{D}$ and $d_\diamondsuit$ in the following way: Let ${\Psi, \Phi: L(\h)\to L(\h')}$ be postselection superoperators, then
    \begin{eqnarray*}
        \widehat{d}_{tr}(\Psi, \Phi) &:=& \sup_{\rho\in \mathcal{D}(\h)}\left|\!\left|\frac{\Psi(\rho)}{\tr\left[\Psi(\rho)\right]}-\frac{\Phi(\rho)}{\tr\left[\Phi(\rho)\right]}\right|\!\right|_{tr}\\
        \widehat{d}_{\diamondsuit}(\Psi, \Phi) &:=& \sup_{\K}\, \widehat{d}_{tr}(\Psi\otimes\mathcal{I}_\K, \Phi\otimes\mathcal{I}_\K)\text{,}
    \end{eqnarray*}
\noindent where $\mathcal{I}_\K: L(\K)\to L(\K)$ is the identity superoperator. Note that $\widehat{d}_{tr}$ and $\widehat{d}_{\diamondsuit}$ are pseudometrics: they are zero if $\Psi=\Phi$ (but not only if!), they are symmetric and obey the triangle inequality. Unfortunately, the function of $\rho$ that $\widehat{d}_{tr}$ maximizes could be not convex (see \Cref{lem:convex}). We will still show that both suprema are achieved and that the pseudometrics obey inequalities equivalent to \Cref{lem:chaining_ah,lem:diam_watrous,lem:op}:

\begin{mytheorem}[roughly] 
Let $U:\h\to \h'$ be an isometry, $\Psi: L(\h)\to L(\h')$ a post\-selection superoperator. Denote by $A:\h\to\h'\otimes\K'$ a Stinespring-dilation operator~of~$\Psi$.
\begin{enumerate}[topsep=2pt,itemsep=2pt]
    \item[\textbf{\upshape{\ref{lem:chaining_post}}}] \textbf{\upshape{(Weak subadditivity of $\boldsymbol{\widehat{d}_\diamondsuit}$).}} If the ideal subroutines are trace-preserving, \Cref{lem:chaining_ah} holds also when the distances are measured by $\smash{\widehat{d}_\diamondsuit}$.
    \item[\textbf{\upshape{\ref{lem:diam_post}}}] \textbf{\upshape{($\boldsymbol{\widehat{d}_{tr}}$ small $\boldsymbol{\xxRightarrow{U} \widehat{d}_\diamondsuit}$ small).}}
    \Cref{lem:diam_watrous} has a postselection equivalent.
    \item[\textbf{\upshape{\ref{lem:op_post}}}] \textbf{\upshape{($\boldsymbol{\widehat{d}_{tr}}$ small $\boldsymbol{\xxRightarrow{U} \left|\!\left|\cdot\right|\!\right|_{op}}$ small).}}
    \Cref{lem:op} has a postselection equivalent.
\end{enumerate}
\end{mytheorem}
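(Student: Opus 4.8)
The engine of all three results will be a \emph{conversion lemma} stating that, when the ideal map is close to $\Psi$ in the postselected sense, the postselection probability $p(\rho):=\tr[\Psi(\rho)]$ is nearly constant, so that the nonlinear renormalized map \eqref{map_nonlinear} may be replaced by the genuine linear map $\Psi/p_{\max}$, where $p_{\max}:=\max_{\rho}p(\rho)$. The plan is to prove this and then, for each statement, (i) convert the postselected hypothesis into a standard hypothesis about $\Psi/p_{\max}$, (ii) invoke the matching standard result (\Cref{lem:chaining_ah}, \Cref{lem:diam_watrous}/\Cref{lem:diam}, or \Cref{lem:op}), and (iii) convert the standard conclusion back to the postselected setting. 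The backward step (iii) is cheap: dividing a map by a positive constant does not change the renormalized map \eqref{map_nonlinear}, and if $\|\Phi'(\sigma)-V\sigma V^\dagger\|_{tr}\le\delta$ for a trace-preserving $V\cdot V^\dagger$, then $|\tr[\Phi'(\sigma)]-1|\le\delta$ forces $\|\Phi'(\sigma)/\tr[\Phi'(\sigma)]-V\sigma V^\dagger\|_{tr}\le 2\delta$. Thus small standard (trace or diamond) distance to a trace-preserving ideal automatically yields small postselected distance, at the cost of a factor of two.

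The heart is the forward conversion. I would first write $p(\rho)=\tr[\rho\,M]$ with $M:=\Psi^\dagger(\id)$ a positive operator satisfying $0<p_{\min}\id\le M\le\id$, whose extreme values $p_{\min},p_{\max}$ are attained at \emph{orthogonal} eigenvectors $\ket{\psi_0},\ket{\psi_1}$. The key step is to expose the imbalance $p_{\max}-p_{\min}$ through a single input and bound it by the postselected distance. For an isometric ideal $U\cdot U^\dagger$ (the setting of \Cref{lem:diam_post,lem:op_post}) I would feed the equal mixture $\tfrac12\bigl(\ket{\psi_0}\!\bra{\psi_0}+\ket{\psi_1}\!\bra{\psi_1}\bigr)$: since $U$ preserves trace norm the two branches land at the maximal trace distance $2$, and a reverse-triangle estimate on the renormalized output forces $\tfrac{p_{\max}-p_{\min}}{p_{\max}+p_{\min}}=O(\epsilon)$, hence $\max_\rho|p(\rho)/p_{\max}-1|=O(\epsilon)$ and $d_{tr}^{\mathcal D}(\Psi/p_{\max},\,U\cdot U^\dagger)=O(\epsilon)$. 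For a merely trace-preserving ideal (the setting of \Cref{lem:chaining_post}) the mixture fails, because channels contract trace distance; instead I would use the diamond structure and feed the entangled state $\tfrac1{\sqrt2}\bigl(\ket{\psi_0}\ket0+\ket{\psi_1}\ket1\bigr)$ on $\h\otimes\K$. Tracing out $\h'$, trace preservation makes the ideal's reference qubit maximally mixed, whereas for $\Psi$ its normalized diagonal is $\big(\tfrac{p_{\max}}{p_{\max}+p_{\min}},\,\tfrac{p_{\min}}{p_{\max}+p_{\min}}\big)$; a computational-basis measurement again yields $\tfrac{p_{\max}-p_{\min}}{p_{\max}+p_{\min}}\le 2\,\widehat d_\diamondsuit(\Psi,\Phi)$. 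This is precisely why \Cref{lem:chaining_post} must use the diamond distance and trace-preserving ideals.

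With the conversion lemma in hand the three proofs are short. For \Cref{lem:diam_post} I convert the hypothesis to $d_{tr}^{\mathcal D}(\Psi/p_{\max},U\cdot U^\dagger)=O(\epsilon)$, apply the trace-nonincreasing version \Cref{lem:diam} of Watrous's bound, and convert back, losing only constants. For \Cref{lem:op_post} I apply \Cref{lem:op} to $\Psi/p_{\max}$, whose Stinespring operator is $A/\sqrt{p_{\max}}$, and translate the conclusion into $\|A-\sqrt{p_{\max}}\,(U\otimes\ket g)\|_{op}=O(\sqrt\epsilon)$, using $p_{\max}\le 1$. For \Cref{lem:chaining_post} I convert each subroutine $\Psi_i$ to the linear $\Psi_i/p_{\max}^{(i)}$ with $d_\diamondsuit(\Psi_i/p_{\max}^{(i)},\Phi_i)=O(\epsilon)$, apply standard subadditivity \Cref{lem:chaining_ah} to the composition (the scalar $\prod_i p_{\max}^{(i)}$ factors out and leaves the renormalized composed map unchanged), and convert back to obtain $\widehat d_\diamondsuit=O(N\epsilon)$.

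The attainment of both suprema is a separate and easier matter: $\widehat d_{tr}$ maximizes a function continuous on the compact set $\mathcal D(\h)$ because $p(\cdot)\ge p_{\min}>0$, and the $\sup_\K$ defining $\widehat d_\diamondsuit$ stabilizes at $\dim\K=\dim\h$ exactly as in the unpostselected case. I expect the main obstacle to be the forward conversion, namely choosing the right probe state and extracting a clean $O(\epsilon)$ bound on the relative spread of $p$; everything downstream is bookkeeping around the already-established inequalities.
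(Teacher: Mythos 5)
Your architecture for \Cref{lem:diam_post} and \Cref{lem:op_post} coincides with the paper's --- a conversion lemma showing the postselection probability is nearly constant, followed by the standard results \Cref{lem:diam} and \Cref{lem:op} applied to $\Psi/\|\Psi\|_\diamondsuit$ and a cheap factor-of-two backward conversion --- but your proof of the conversion step itself is genuinely different. The paper's \Cref{lem:conversion} bounds $|\tr[\Psi(\rho_1)]-\tr[\Psi(\rho_0)]|$ for \emph{arbitrary} $\rho_0,\rho_1$ by writing the renormalized difference along the segment $\rho_p$ as a quadratic polynomial in $p$ and isolating its leading coefficient, which produces a constant $\alpha$ inversely proportional to the output spread of the ideal $\Phi$. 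You instead diagonalize $M=\Psi^\dagger(\id)$ and probe with a single state built from the two extremal, orthogonal eigenvectors: the equal mixture when the ideal is an isometry (so that orthogonality survives to the output and the trace distance of the branches is exactly $2$), and an entangled superposition, read out on the reference register, when the ideal is merely trace-preserving. Both computations check out, give comparable or slightly better constants, and the entangled version removes the $\alpha$-dependence at the price of using $\widehat d_\diamondsuit$ rather than $\widehat d_{tr}$ --- consistent with the paper's counterexample, which only forbids an $\alpha$-free bound at the level of $\widehat d_{tr}$.

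Two caveats. First, the paper does \emph{not} route \Cref{lem:chaining_post} through the conversion lemma: it gives a direct two-term triangle-inequality argument using only that the downstream ideal is trace-preserving and contracts the trace norm, which yields the exact additive bound $\widehat d_\diamondsuit(\Psi'\circ(\Psi\otimes\mathcal I_{\K'}),\Phi'\circ(\Phi\otimes\mathcal I_{\K'}))\le\widehat d_\diamondsuit(\Psi',\Phi')+\widehat d_\diamondsuit(\Psi,\Phi)$ and requires only the \emph{outer} ideal subroutine to be trace-preserving. Your detour through conversion and \Cref{lem:chaining_ah} costs a multiplicative constant (roughly $6N\epsilon$ rather than $N\epsilon$) and needs every ideal subroutine to be trace-preserving; since \Cref{lem:chaining_ah} asserts the bound $N\epsilon$, you are proving a strictly weaker analogue there. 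Second, your remark that $\sup_\K$ stabilizes at $\dim\K=\dim\h$ ``exactly as in the unpostselected case'' glosses over the failure of convexity of the objective $f_{\Psi,\Phi}$ (\Cref{lem:convex}): the standard argument restricts to pure inputs by convexity, which is unavailable here, and the paper's \Cref{lem:diam_post_max} inserts a purification-and-contractivity step to repair this. Both issues are fixable, but they mark where your writeup falls short of what the paper actually establishes.
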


\noindent Remarkably, the trace-preserving requirement in \Cref{lem:chaining_post} is not superfluous; once dropped, a counterexample exists (see \Cref{sec:additivity}). \Cref{lem:diam_post,lem:op_post} are proven using our main result - the Conversion \namecref{lem:conversion}:

\newtheorem*{lemmas}{\Cref{lem:conversion}}
\begin{lemmas}[Main: Conversion lemma) (roughly]
	Let $\Psi,\Phi$ be postselection superoperators, \linebreak $\Phi$ trace-preserving. There exists a positive scalar $k$ such that $\widehat{d}_{tr}(\Psi, \Phi)$ and $d_{tr}^\mathcal{D}(\frac{\Psi}{k}, \Phi)$ are equivalent up to constant factors.
\end{lemmas}

\noindent As a (zero-distance) example take $\Phi: L(\h)\to L(\h')$ to be the identity superoperator between two spatially separated registers $\h$ and $\h'$ of the same dimension. The quantum teleportation with postselection on the appropriate Bell-measurement outcome is a local $\Psi$ that simulates the nonlocal $\Phi$ {\it exactly}. By \Cref{lem:conversion} we have $\frac{\Psi}{k}=\Phi$, so $\Psi$'s postselection probability must be $k$, i.e constant over all input states - indeed we know that $\smash{\tr\left[\Psi_{\text{telep}}(\rho)\right]=k_{\text{telep}}=\dim(\h)^{-2}}$.

\section{Preliminaries}\label{sec:prelim}

In this section we review some standard distance measures on linear trace-nonincreasing CP superoperators. Let $\Psi, \Phi: L(\h)\to L(\h')$ be such superoperators. The {\it operational trace induced distance} is induced from a norm $\left|\!\left|\cdot\right|\!\right|_{tr}^\mathcal{D}$, which, in turn, is induced from the trace norm on operators in the following way
\begin{equation}
    d_{tr}^\mathcal{D}(\Psi, \Phi)=\left|\!\left|\Psi-\Phi\right|\!\right|_{tr}^\mathcal{D}:=\sup_{\rho\in\mathcal{D}(\h)}\left|\!\left|\Psi(\rho)-\Phi(\rho)\right|\!\right|_{tr}\text{.}\label{eq:def_op_tr}
\end{equation}

\noindent The {\it trace-induced distance} is similar, the only difference being that the supremum is now over all linear operators $ L(\h)$, not only density matrices
\begin{equation}
    d_{tr}(\Psi, \Phi)=\left|\!\left|\Psi-\Phi\right|\!\right|_{tr}:=\!\!\!\sup_{\substack{X\in L(\h),\\\left|\!\left|X\right|\!\right|_{tr}\leq 1}}\!\!\!
    \left|\!\left|\Psi(X)-\Phi(X)\right|\!\right|_{tr}\text{.}\label{eq:def_tr}
\end{equation}
By definition, this distance upper bounds the first one, $d_{tr}^\mathcal{D}(\Psi, \Phi)\leq d_{tr}(\Psi, \Phi)$. On one hand, Watrous \cite{watrous2005notes} found $\Psi, \Phi$ such that this inequality is strict, but on the other hand, the {\it diamond distance}, $d_\diamondsuit(\Psi, \Phi)=\left|\!\left|\Psi-\Phi\right|\!\right|_\diamondsuit$, is the stabilized version of both
\begin{eqnarray}
    d_\diamondsuit(\Psi, \Phi):&=&\sup_{\K}d_{tr}(\Psi\otimes\mathcal{I}_\K, \Phi\otimes\mathcal{I}_\K)\nonumber\\
    &=&\sup_{\K}d_{tr}^\mathcal{D}(\Psi\otimes\mathcal{I}_\K, \Phi\otimes\mathcal{I}_\K)\text{,}\label{eq:def_diam}
\end{eqnarray}

\noindent where $\mathcal{I}_\K: L(\K)\to L(\K)$ is the identity superoperator. The last equality in (\ref{eq:def_diam}) follows from the following result of Watrous \cite[Lemma 3.45 and Theorem 3.51]{watrous2018theory} by setting $f=\Psi-\Phi$:
\begin{myfact}[due to \cite{watrous2018theory}]\label{lem:diam_max}
    For any Hermitian-preserving linear map $f: L(\h)\to L(\h')$, there exists a normalized vector $\ket{u}\in\h\otimes\h$ such that setting $\K=\h$ and $X=\ket{u}\!\!\bra{u}\in\mathcal{D}(\h\otimes\h)$ achieves the suprema in definitions \eqref{eq:def_diam} and \eqref{eq:def_tr}, i.e. $\left|\!\left|f\right|\!\right|_\diamondsuit=\left|\!\left|f\otimes\mathcal{I}_\h(\ket{u}\!\!\bra{u})\right|\!\right|_{tr}$.
\end{myfact}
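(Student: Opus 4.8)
The plan is to show that the supremum defining $\left|\!\left|f\right|\!\right|_\diamondsuit$ is actually \emph{attained}, and attained by a pure density operator on the doubled space $\h\otimes\h$; the same state then certifies the suprema in both \eqref{eq:def_diam} and \eqref{eq:def_tr} at once. Write $\gamma:=\sup\{\left|\!\left|(f\otimes\mathcal{I}_\h)(\ket{u}\!\!\bra{u})\right|\!\right|_{tr}:\ket{u}\in\h\otimes\h,\ \langle u|u\rangle=1\}$. The bound $\gamma\le\left|\!\left|f\right|\!\right|_\diamondsuit$ is immediate, since each $\ket{u}\!\!\bra{u}$ is an admissible input with ancilla $\K=\h$. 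The substance is the reverse inequality: for \emph{every} ancilla $\K$ and every $X\in L(\h\otimes\K)$ with $\left|\!\left|X\right|\!\right|_{tr}\le1$ I must establish $\left|\!\left|(f\otimes\mathcal{I}_\K)(X)\right|\!\right|_{tr}\le\gamma$, after which $\left|\!\left|f\right|\!\right|_\diamondsuit\le\gamma$ follows by taking suprema over $\K$ and $X$.

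I would carry out this reverse bound in three reductions. First, reduce to rank-one inputs: the function $X\mapsto\left|\!\left|(f\otimes\mathcal{I}_\K)(X)\right|\!\right|_{tr}$ is convex, the trace-norm unit ball is compact with extreme points exactly the rank-one operators $\ket{\psi}\!\!\bra{\phi}$ (with $\ket{\psi},\ket{\phi}$ unit vectors in $\h\otimes\K$), so the maximum over the ball is attained at some such $X=\ket{\psi}\!\!\bra{\phi}$.

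The key step -- and the only place where Hermitian-preservation is used -- converts this rank-one operator into a pure state by appending a qubit flag. Set $\ket{w}:=\tfrac1{\sqrt2}\bigl(\ket{\psi}\ket{0}+\ket{\phi}\ket{1}\bigr)\in\h\otimes\K\otimes\mathbb{C}^2$. Since $f$, hence $f\otimes\mathcal{I}_\K$, is Hermitian-preserving, applying $f\otimes\mathcal{I}_{\K\otimes\mathbb{C}^2}$ to $\ket{w}\!\!\bra{w}$ produces, in the flag basis, a \emph{Hermitian} $2\times2$ block matrix $M$ with diagonal blocks $\tfrac12(f\otimes\mathcal{I}_\K)(\ket{\psi}\!\!\bra{\psi})$, $\tfrac12(f\otimes\mathcal{I}_\K)(\ket{\phi}\!\!\bra{\phi})$ and off-diagonal block $B:=\tfrac12(f\otimes\mathcal{I}_\K)(\ket{\psi}\!\!\bra{\phi})$; it is precisely Hermitian-preservation that forces the lower-left block to be $B^\dagger$, so that $M$ is Hermitian and $\ket{w}\!\!\bra{w}$ a genuine state. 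Conjugating $M$ by the unitary $Z=\mathrm{diag}(\id,-\id)$ and averaging (the off-diagonal part of $M$ equals $\tfrac12(M-ZMZ)$) gives, by unitary invariance and the triangle inequality, $\left|\!\left|(f\otimes\mathcal{I}_\K)(\ket{\psi}\!\!\bra{\phi})\right|\!\right|_{tr}=2\left|\!\left|B\right|\!\right|_{tr}\le\left|\!\left|M\right|\!\right|_{tr}=\left|\!\left|(f\otimes\mathcal{I}_{\K\otimes\mathbb{C}^2})(\ket{w}\!\!\bra{w})\right|\!\right|_{tr}$, so the pure input $\ket{w}$ does at least as well as the rank-one operator. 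I expect this block-pinching inequality to be the main obstacle, since it is exactly where the argument would collapse for a non-Hermitian-preserving $f$.

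Finally, I would trim the ancilla back to dimension $\dim\h$. By the Schmidt decomposition across the cut $\h$ versus $\K\otimes\mathbb{C}^2$, the unit vector $\ket{w}$ has Schmidt rank at most $\dim\h$, so its ancilla support lies in a subspace of dimension at most $\dim\h$; an isometric relabelling of that subspace carries $\ket{w}$ to a unit vector $\ket{u}\in\h\otimes\h$, and since conjugation by the relabelling isometry $\id\otimes V$ intertwines $f\otimes\mathcal{I}_{\K\otimes\mathbb{C}^2}$ with $f\otimes\mathcal{I}_\h$ and preserves the trace norm, $\left|\!\left|(f\otimes\mathcal{I}_{\K\otimes\mathbb{C}^2})(\ket{w}\!\!\bra{w})\right|\!\right|_{tr}=\left|\!\left|(f\otimes\mathcal{I}_\h)(\ket{u}\!\!\bra{u})\right|\!\right|_{tr}\le\gamma$. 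Chaining the three reductions yields $\left|\!\left|(f\otimes\mathcal{I}_\K)(X)\right|\!\right|_{tr}\le\gamma$ for all $\K,X$, hence $\left|\!\left|f\right|\!\right|_\diamondsuit=\gamma$. The supremum defining $\gamma$ is attained because $\ket{u}\mapsto\left|\!\left|(f\otimes\mathcal{I}_\h)(\ket{u}\!\!\bra{u})\right|\!\right|_{tr}$ is continuous on the compact unit sphere of $\h\otimes\h$; its maximizer is the asserted vector. As $\ket{u}\!\!\bra{u}\in\mathcal{D}(\h\otimes\h)$ satisfies $\left|\!\left|\ket{u}\!\!\bra{u}\right|\!\right|_{tr}=1$, it is feasible for \eqref{eq:def_tr} (with $\K=\h$) and, being a density operator, for the operational distance too, so this single state certifies the suprema in \eqref{eq:def_diam} and \eqref{eq:def_tr} simultaneously.
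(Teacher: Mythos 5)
Your proposal is correct and follows essentially the same route as the proof the paper relies on for this Fact, which it does not reprove but imports from Watrous (Lemma 3.45 and Theorem 3.51): extreme-point reduction of the trace-norm ball to rank-one inputs $\ket{\psi}\!\!\bra{\phi}$, the flag-qubit vector $\ket{w}=\tfrac{1}{\sqrt{2}}(\ket{\psi}\ket{0}+\ket{\phi}\ket{1})$ with the pinching bound $2\left|\!\left|B\right|\!\right|_{tr}\leq\left|\!\left|M\right|\!\right|_{tr}$ (where Hermitian-preservation is exactly what makes the lower-left block $B^\dagger$), and the Schmidt-rank trimming of the ancilla to $\dim\h$ followed by a compactness argument for attainment. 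I see no gaps; the final observation that the maximizing pure state is simultaneously feasible for definitions \eqref{eq:def_tr} and \eqref{eq:def_diam} correctly accounts for the "both suprema" clause of the statement.
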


We have mentioned the {\it Stinespring dilation}\cite{stinespring1955positive}, which states that any linear CP superoperator $\Psi: L(\h)\to L(\h')$ can be written (non-uniquely) as $\Psi(\cdot)=\tr_{\,\K'}\left[A\cdot A^\dagger\right]$ for some extension Hilbert space $\K'$ and some linear operator $A:\h\to\h'\otimes\K'$. For linear operators $O:\h\to\h'$ we will use the {\it operator norm} defined as $\left|\!\left|O\right|\!\right|_{op}:=\sup_{\ket{v}\in\h,\, \left|\!\left|\ket{v}\right|\!\right|=1}\left|\!\left|O\ket{v}\right|\!\right|$. We will find the following \namecref{lem:norms} useful:
\begin{myfact}\label{lem:norms}
    If $\Psi$ is a linear completely positive superoperator and $A$ corresponds to its Stinespring dilation, then $\left|\!\left|\Psi\right|\!\right|_\diamondsuit = \left|\!\left|\Psi\right|\!\right|_{tr}=\left|\!\left|\Psi \right|\!\right|_{tr}^\mathcal{D}=\left|\!\left|A\right|\!\right|^2_{op}$.
\end{myfact}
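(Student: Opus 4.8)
The plan is to exploit complete positivity to pin both ends of the obvious chain of inequalities to the single value $\left|\!\left|A\right|\!\right|_{op}^2$. First I would record the two inequalities that hold for any Hermitian-preserving map and are immediate from the definitions: a density operator has unit trace norm, so it is an admissible $X$ in \eqref{eq:def_tr}, giving $\left|\!\left|\Psi\right|\!\right|_{tr}^\mathcal{D}\leq\left|\!\left|\Psi\right|\!\right|_{tr}$; and the trivial choice of $\K$ in \eqref{eq:def_diam} gives $\left|\!\left|\Psi\right|\!\right|_{tr}\leq\left|\!\left|\Psi\right|\!\right|_\diamondsuit$. It therefore suffices to show that the smallest of the three equals $\left|\!\left|A\right|\!\right|_{op}^2$ and that the largest is at most $\left|\!\left|A\right|\!\right|_{op}^2$.

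For the lower end, the key point is that complete positivity turns the trace norm into the (linear) trace. Since $\Psi$ is CP, $\Psi(\rho)=\tr_{\,\K'}[A\rho A^\dagger]$ is positive semidefinite for every $\rho\in\mathcal{D}(\h)$, so $\left|\!\left|\Psi(\rho)\right|\!\right|_{tr}=\tr[\Psi(\rho)]=\tr[A\rho A^\dagger]$. This is a linear functional of $\rho$, hence its supremum over the convex set $\mathcal{D}(\h)$ is attained at an extreme point, i.e. at a pure state $\rho=\ket{v}\!\!\bra{v}$. For such a state the value is $\bra{v}A^\dagger A\ket{v}=\left|\!\left|A\ket{v}\right|\!\right|^2$, and maximizing over unit vectors $\ket{v}$ yields exactly $\left|\!\left|A^\dagger A\right|\!\right|_{op}=\left|\!\left|A\right|\!\right|_{op}^2$. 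Thus $\left|\!\left|\Psi\right|\!\right|_{tr}^\mathcal{D}=\left|\!\left|A\right|\!\right|_{op}^2$.

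For the upper end I would invoke \Cref{lem:diam_max}: because $\Psi$ is CP, hence Hermitian-preserving, there is a unit vector $\ket{u}\in\h\otimes\h$ with $\left|\!\left|\Psi\right|\!\right|_\diamondsuit=\left|\!\left|(\Psi\otimes\mathcal{I}_\h)(\ket{u}\!\!\bra{u})\right|\!\right|_{tr}$. The map $\Psi\otimes\mathcal{I}_\h$ is again CP, with Stinespring-dilation operator $A\otimes\id_\h$ (after the routine reordering of the $\K'$ and $\h$ tensor factors), and $\ket{u}\!\!\bra{u}$ is a pure state, so the computation of the previous paragraph applies verbatim to give $\left|\!\left|(\Psi\otimes\mathcal{I}_\h)(\ket{u}\!\!\bra{u})\right|\!\right|_{tr}=\left|\!\left|(A\otimes\id_\h)\ket{u}\right|\!\right|^2\leq\left|\!\left|A\otimes\id_\h\right|\!\right|_{op}^2$. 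Multiplicativity of the operator norm over tensor products together with $\left|\!\left|\id_\h\right|\!\right|_{op}=1$ then give $\left|\!\left|\Psi\right|\!\right|_\diamondsuit\leq\left|\!\left|A\right|\!\right|_{op}^2$. Chaining the two ends, $\left|\!\left|A\right|\!\right|_{op}^2=\left|\!\left|\Psi\right|\!\right|_{tr}^\mathcal{D}\leq\left|\!\left|\Psi\right|\!\right|_{tr}\leq\left|\!\left|\Psi\right|\!\right|_\diamondsuit\leq\left|\!\left|A\right|\!\right|_{op}^2$ forces all four quantities to coincide.

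I expect the only real subtlety to be the bookkeeping that identifies $A\otimes\id_\h$ as a Stinespring operator of $\Psi\otimes\mathcal{I}_\h$ and that $\left|\!\left|A\otimes\id_\h\right|\!\right|_{op}=\left|\!\left|A\right|\!\right|_{op}$; both are routine but must be stated so that the pure-state argument can be reused on the extended space. The conceptual crux is simply that complete positivity collapses each trace-norm supremum to a maximization of the linear functional $\rho\mapsto\tr[A\rho A^\dagger]$, which is precisely the squared operator norm of $A$.
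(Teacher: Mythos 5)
Your proposal is correct and follows essentially the same route as the paper: the chain $\left|\!\left|\Psi\right|\!\right|_{tr}^\mathcal{D}\leq\left|\!\left|\Psi\right|\!\right|_{tr}\leq\left|\!\left|\Psi\right|\!\right|_\diamondsuit$ from the definitions, the observation that complete positivity lets you replace the trace norm by the trace $\tr[A\rho A^\dagger]$ to pin the bottom of the chain at $\left|\!\left|A\right|\!\right|_{op}^2$, and \Cref{lem:diam_max} together with $\left|\!\left|A\otimes\id_\h\right|\!\right|_{op}=\left|\!\left|A\right|\!\right|_{op}$ to cap the top at the same value. The only cosmetic difference is that you establish the bottom equality directly via an extreme-point argument, whereas the paper only needs the inequality $\left|\!\left|\Psi\right|\!\right|_{tr}^\mathcal{D}\geq\left|\!\left|A\right|\!\right|_{op}^2$ before closing the chain.
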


\begin{proof} Note that $\left|\!\left|\Psi\right|\!\right|_\diamondsuit \geq \left|\!\left|\Psi\right|\!\right|_{tr}\geq\left|\!\left|\Psi \right|\!\right|_{tr}^\mathcal{D}\geq\left|\!\left|A\right|\!\right|^2_{op}$, the first two inequalities following immediately from the definitions and the last from the fact that for a positive semidefinite operator $X$ we have $\left|\!\left|X\right|\!\right|_{tr}=\tr\left[X\right]$ and so $\left|\!\left|\Psi \right|\!\right|_{tr}^\mathcal{D}=\sup_{\rho\in\mathcal{D}(\h)}\tr\left[A\rho A^\dagger\right]$,
which is bigger or equal to ${\left|\!\left|A\right|\!\right|_{op}^2= \sup_{\ket{v}\in\h,\, \left|\!\left|\ket{v}\right|\!\right|=1}\tr\left[A\ket{v}\!\!\bra{v}A^\dagger\right]}$.
It remains to show $\left|\!\left|\Psi\right|\!\right|_\diamondsuit\leq \left|\!\left|A\right|\!\right|_{op}^2$. Apply \Cref{lem:diam_max} to get $\left|\!\left|\Psi\right|\!\right|_\diamondsuit = \tr\left[\Psi\otimes\mathcal{I}_\h\left(\ket{u}\!\!\bra{u}\right)\right]\leq \left|\!\left|A\otimes\id\right|\!\right|_{op}^2=\left|\!\left|A\right|\!\right|_{op}^2$, which completes the proof.
\end{proof}

\noindent If $\Psi$ is also trace-nonincreasing, we have $\left|\!\left|\Psi(\rho)\right|\!\right|_{tr}=\tr\left[\Psi(\rho)\right]\leq 1$ for all $\rho\in\mathcal{D}(\h)$ so that by definition \eqref{eq:def_op_tr} $\left|\!\left|\Psi\right|\!\right|_{tr}^\mathcal{D}\leq 1$. By definition \eqref{eq:def_tr} $\left|\!\left|\Psi(X)\right|\!\right|_{tr}\leq \left|\!\left|\Psi\right|\!\right|_{tr} \left|\!\left|X\right|\!\right|_{tr}$, which with \Cref{lem:norms} gives:

\begin{myfact} A trace non-increasing CP map $\Psi$ contracts the trace norm: ${\left|\!\left|\Psi(\cdot)\right|\!\right|_{tr}\leq \left|\!\left|\cdot\right|\!\right|_{tr}}$.\label{lem:contract}
\end{myfact}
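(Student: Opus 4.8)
The plan is to reduce the claimed contractivity to the single estimate $\left|\!\left|\Psi\right|\!\right|_{tr}\leq 1$, and then read off the conclusion from the submultiplicativity that is built into the induced trace norm. I would carry this out in three short steps, exploiting complete positivity at exactly one place.

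First I would bound the operational trace-induced norm. For any $\rho\in\mathcal{D}(\h)$, complete positivity guarantees that $\Psi(\rho)$ is positive semidefinite, so its trace norm equals its trace; the trace-nonincreasing hypothesis then gives $\left|\!\left|\Psi(\rho)\right|\!\right|_{tr}=\tr\left[\Psi(\rho)\right]\leq\tr\left[\rho\right]=1$. Taking the supremum over all density operators and using \eqref{eq:def_op_tr} yields $\left|\!\left|\Psi\right|\!\right|_{tr}^\mathcal{D}\leq 1$. Next I would promote this from the operational norm to the full trace-induced norm: since $\Psi$ is CP, \Cref{lem:norms} applies and all three induced norms coincide, whence $\left|\!\left|\Psi\right|\!\right|_{tr}=\left|\!\left|\Psi\right|\!\right|_{tr}^\mathcal{D}\leq 1$. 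Finally, submultiplicativity of the induced norm, which is immediate from definition \eqref{eq:def_tr}, gives for every $X\in L(\h)$ the bound $\left|\!\left|\Psi(X)\right|\!\right|_{tr}\leq\left|\!\left|\Psi\right|\!\right|_{tr}\left|\!\left|X\right|\!\right|_{tr}\leq\left|\!\left|X\right|\!\right|_{tr}$, which is precisely the asserted contraction.

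There is no serious obstacle here, but the one point deserving care is the passage from the supremum over density operators (the operational norm $\left|\!\left|\cdot\right|\!\right|_{tr}^\mathcal{D}$) to the supremum over all of $L(\h)$ (the norm $\left|\!\left|\cdot\right|\!\right|_{tr}$). These two quantities can genuinely differ, as Watrous's example shows, so the equality is \emph{not} automatic and must be justified. In the present setting it is licensed solely by the complete positivity of $\Psi$, which is what makes \Cref{lem:norms} applicable via the Stinespring dilation; without CP the argument would stall at the bound on $\left|\!\left|\Psi\right|\!\right|_{tr}^\mathcal{D}$ alone.
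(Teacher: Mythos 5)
Your proof is correct and follows essentially the same route as the paper: bound $\left|\!\left|\Psi\right|\!\right|_{tr}^\mathcal{D}\leq 1$ from positivity plus the trace-nonincreasing hypothesis, invoke \Cref{lem:norms} to conclude $\left|\!\left|\Psi\right|\!\right|_{tr}\leq 1$, and finish by submultiplicativity of the induced norm. Your closing remark correctly identifies the one nontrivial step, namely that passing from $\left|\!\left|\cdot\right|\!\right|_{tr}^\mathcal{D}$ to $\left|\!\left|\cdot\right|\!\right|_{tr}$ is licensed only by complete positivity via \Cref{lem:norms}.
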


\section{Results: Standard setting}

In this section we prove \Cref{lem:op} and introduce a new version of \Cref{lem:diam_watrous}.

\begin{regtheorem}[$\boldsymbol{d_{tr}^\mathcal{D}}$ small $\boldsymbol{\xxRightarrow{U} \left|\!\left|\cdot\right|\!\right|_{op}}$ small]\label{lem:op}
    Let ${U\!\!:\h\to\! \h'}$ be an isometry and $\Psi: L(\h)\to L(\h')$ a linear CP trace-nonincreasing superoperator. Denote by $A:\h\to\h'\otimes\K'$ a Stinespring-dilation operator of $\Psi$. If $d_{tr}^\mathcal{D}(\Psi, U\cdot U^\dagger) \leq \epsilon$, then there exists $\ket{g}\in\K'$, with the norm $1-\epsilon\leq \left|\!\left|\ket{g}\right|\!\right|^2\leq \left|\!\left|A\right|\!\right|^2_{op}\leq 1$, such that $\left|\!\left|A-U\otimes \ket{g}\right|\!\right|_{op}\leq 2\sqrt {\epsilon}$.
\end{regtheorem}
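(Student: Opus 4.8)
The plan is to show that the dilation operator $A$ maps every unit input $\ket{v}$ to a vector $A\ket{v}$ that is \emph{uniformly} close to the product vector $U\ket{v}\otimes\ket{g}$ for a single, input-independent $\ket{g}\in\K'$; since $\|A-U\otimes\ket{g}\|_{op}=\sup_{\|\ket{v}\|=1}\|A\ket{v}-U\ket{v}\otimes\ket{g}\|$, this is exactly the claimed operator-norm bound. The natural auxiliary vector attached to an input $\ket{v}$ is its component along the correct output direction, $\ket{g_v}:=(\bra{v}U^\dagger\otimes\id_{\K'})A\ket{v}\in\K'$; I would fix one reference unit vector $\ket{v_0}$ and simply take $\ket{g}:=\ket{g_{v_0}}$.

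Because the inputs I will need include non-positive operators of the form $\ket{v}\bra{v_0}$, the first step is to pass from $d_{tr}^\mathcal{D}$ to $d_{tr}$: by \Cref{lem:trace_dists}, the hypothesis $d_{tr}^\mathcal{D}(\Psi,U\cdot U^\dagger)\leq\epsilon$ gives $d_{tr}(\Psi,U\cdot U^\dagger)\leq 2\epsilon$, so $\|\Psi(X)-UXU^\dagger\|_{tr}\leq 2\epsilon$ for every $X$ with $\|X\|_{tr}\leq 1$. Two consequences drive the proof. First, a \emph{diagonal} bound: feeding the density operator $\ket{v}\bra{v}$ and reading off the $(Uv,Uv)$ entry, one has $\langle Uv|\Psi(\ket{v}\bra{v})|Uv\rangle=\|g_v\|^2$, and since the operator norm is dominated by the trace norm the $\epsilon$-closeness to the rank-one $\ket{Uv}\bra{Uv}$ forces $\|g_v\|^2\geq 1-\epsilon$ for every $v$; in particular $\|g\|^2\geq 1-\epsilon$, while $\|g\|^2\leq\|A\|_{op}^2\leq 1$ follows from $\|g_{v_0}\|\leq\|A\ket{v_0}\|$ together with \Cref{lem:norms} and trace-nonincreasingness. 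Second, an \emph{off-diagonal} (alignment) bound: feeding $\ket{v}\bra{v_0}$ (normalized, so $\|\ket{v}\bra{v_0}\|_{tr}=1$) and computing the $(Uv,Uv_0)$ entry gives $\langle Uv|\Psi(\ket{v}\bra{v_0})|Uv_0\rangle=\langle g_{v_0}|g_v\rangle$, whereas the same entry of $\ket{Uv}\bra{Uv_0}$ equals $1$; hence $|\langle g_{v_0}|g_v\rangle-1|\leq 2\epsilon$, i.e. $\mathrm{Re}\,\langle g_v|g\rangle\geq 1-2\epsilon$.

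With these in hand the conclusion is a one-line expansion. For any unit $\ket{v}$, using the identity $\langle A\ket{v}|U\ket{v}\otimes\ket{g}\rangle=\langle g_v|g\rangle$ together with $\|A\ket{v}\|^2\leq 1$ and $\|g\|^2\leq 1$, I get $\|A\ket{v}-U\ket{v}\otimes\ket{g}\|^2=\|A\ket{v}\|^2-2\,\mathrm{Re}\,\langle g_v|g\rangle+\|g\|^2\leq 1-2(1-2\epsilon)+1=4\epsilon$, and taking the supremum over $\ket{v}$ yields $\|A-U\otimes\ket{g}\|_{op}\leq 2\sqrt{\epsilon}$, together with the norm bounds on $\ket{g}$ from the diagonal step.

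I expect the crux to be the off-diagonal step. The diagonal inputs alone only produce the per-input vectors $\ket{g_v}$, each close to the auxiliary part of $A\ket{v}$ for \emph{its own} input, with no a priori reason to coincide across inputs; what forces them to nearly agree with the reference $\ket{g_{v_0}}=\ket{g}$ is precisely the off-diagonal input $\ket{v}\bra{v_0}$, and the only thing that licenses such a non-density-matrix input is the reduction to $d_{tr}$ via \Cref{lem:trace_dists}. This is exactly why the proof cannot stay with $d_{tr}^\mathcal{D}$ and must route through $d_{tr}$. The remaining work is bookkeeping: verifying the partial-trace matrix-element identity $\langle Uv|\Psi(\ket{v}\bra{w})|Uw\rangle=\langle g_w|g_v\rangle$ (with $\ket{g_w}$ defined analogously), and tracking that the constants combine to give exactly $2\sqrt{\epsilon}$ rather than a looser bound.
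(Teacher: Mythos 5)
Your proposal is correct and follows essentially the same route as the paper's proof: the same auxiliary vectors $\ket{g_v}=(\bra{v}U^\dagger\otimes\id_{\K'})A\ket{v}$ with a fixed reference $\ket{g}=\ket{g_{v_0}}$, the same reduction from $d_{tr}^\mathcal{D}$ to $d_{tr}$ via \Cref{lem:trace_dists} to license the off-diagonal input $\ket{v}\!\!\bra{v_0}$, and the same norm expansion yielding $\left|\!\left|A\ket{v}-U\ket{v}\otimes\ket{g}\right|\!\right|^2\leq 2(1-\Re\langle g_v|g\rangle)\leq 4\epsilon$. The only cosmetic difference is that you extract the matrix elements directly via $|\bra{x}M\ket{y}|\leq\left|\!\left|M\right|\!\right|_{tr}$, whereas the paper routes through $\left|\!\left|XO\right|\!\right|_{tr}\leq\left|\!\left|X\right|\!\right|_{tr}\left|\!\left|O\right|\!\right|_{op}$ and $|\tr X|\leq\left|\!\left|X\right|\!\right|_{tr}$.
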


\noindent As a corollary we get a version of \Cref{lem:diam_watrous} with a worse bound but working also for $\Psi$s that decrease the trace of some inputs. This we need for the next section.
\begin{mycor}\label{lem:diam}
    Let $U:\h\to \h'$ be an isometry and $\Psi: L(\h)\to L(\h')$ a linear CP trace-nonincreasing superoperator.
    If $d_{tr}^\mathcal{D}(\Psi, U\cdot U^\dagger) \leq \epsilon$, then $d_\diamondsuit(\Psi, U\cdot U^\dagger)\leq 4\sqrt{\epsilon}+\epsilon$.
\end{mycor}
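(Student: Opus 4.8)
The plan is to feed the operator-norm bound of \Cref{lem:op} into a dilation-to-diamond estimate, paying separately for the fact that the vector $\ket{g}$ produced by \Cref{lem:op} need not be exactly normalized. Let $A:\h\to\h'\otimes\K'$ be the given Stinespring operator of $\Psi$, so $\Psi(\cdot)=\tr_{\K'}[A\cdot A^\dagger]$, and let $\ket{g}\in\K'$ be the vector supplied by \Cref{lem:op}, satisfying $\|A-U\otimes\ket{g}\|_{op}\leq 2\sqrt\epsilon$ and $1-\epsilon\leq\|\ket{g}\|^2\leq 1$. Write $V:=U\otimes\ket{g}:\h\to\h'\otimes\K'$ and note $\|V\|_{op}=\|\ket{g}\|\leq 1$, while $\tr_{\K'}[V\rho V^\dagger]=\|\ket{g}\|^2\,U\rho U^\dagger$ for every $\rho$.

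First I would bound the diamond distance between the \emph{undilated} maps $A\cdot A^\dagger$ and $V\cdot V^\dagger$ (both mapping into $L(\h'\otimes\K')$) by $2\|A-V\|_{op}$; this is the stabilized counterpart of the fact $d_{tr}^\mathcal{D}(A\cdot A^\dagger,B\cdot B^\dagger)\leq 2\|A-B\|_{op}$ noted in the introduction. By \Cref{lem:diam_max} the diamond norm of the (Hermitian-preserving) difference is attained on a pure input $\ket{u}\!\bra{u}$, reducing the problem to estimating $\|\ket{a}\!\bra{a}-\ket{b}\!\bra{b}\|_{tr}$ with $\ket{a}=(A\otimes\id)\ket{u}$, $\ket{b}=(V\otimes\id)\ket{u}$. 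The telescoping identity $\ket{a}\!\bra{a}-\ket{b}\!\bra{b}=\ket{a}(\bra{a}-\bra{b})+(\ket{a}-\ket{b})\bra{b}$, combined with $\|\ket{a}\|,\|\ket{b}\|\leq 1$ and $\|\ket{a}-\ket{b}\|\leq\|A-V\|_{op}$, yields the factor $2\|A-V\|_{op}\leq 4\sqrt\epsilon$. Composing on the output with the partial trace $\tr_{\K'}$, which is CP and trace-preserving and hence (via \Cref{lem:contract} applied to $\tr_{\K'}\otimes\mathcal{I}$ inside the supremum over inputs) can only shrink the diamond distance, gives $d_\diamondsuit(\Psi,\|\ket{g}\|^2\,U\cdot U^\dagger)\leq 4\sqrt\epsilon$.

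It then remains to replace $\|\ket{g}\|^2\,U\cdot U^\dagger$ by $U\cdot U^\dagger$ itself. Their difference is $(\|\ket{g}\|^2-1)\,U\cdot U^\dagger$, a nonpositive scalar times a CP map, so its diamond norm equals $(1-\|\ket{g}\|^2)\,\|U\cdot U^\dagger\|_\diamondsuit$; by \Cref{lem:norms} the isometry map has $\|U\cdot U^\dagger\|_\diamondsuit=\|U\|_{op}^2=1$, and $1-\|\ket{g}\|^2\leq\epsilon$, so this term is at most $\epsilon$. The triangle inequality for $d_\diamondsuit$ then combines the two estimates into $d_\diamondsuit(\Psi,U\cdot U^\dagger)\leq 4\sqrt\epsilon+\epsilon$.

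I expect the only real subtlety to be the first step: obtaining the clean factor $2\|A-V\|_{op}$ for a difference of two \emph{non}-CP maps (where \Cref{lem:norms} does not apply directly, forcing the pure-state reduction of \Cref{lem:diam_max} and the rank-one telescoping), and justifying that appending $\tr_{\K'}$ cannot increase $d_\diamondsuit$. The bookkeeping for $\ket{g}$ not being a unit vector is what produces the additive $\epsilon$; handling the normalization by a separate triangle-inequality step, rather than by renormalizing $\ket{g}$ inside \Cref{lem:op}, is precisely what keeps that correction at $\epsilon$ instead of $2\epsilon$.
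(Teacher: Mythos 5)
Your proof is correct and follows essentially the same route as the paper: apply \Cref{lem:op} to obtain $\ket{g}$, bound $d_\diamondsuit(\Psi,\left|\!\left|\ket{g}\right|\!\right|^2 U\cdot U^\dagger)\leq 2\left|\!\left|A-U\otimes\ket{g}\right|\!\right|_{op}\leq 4\sqrt{\epsilon}$ via the dilation bound plus contractivity of the partial trace, and pay an extra $|\,\left|\!\left|\ket{g}\right|\!\right|^2-1|\leq\epsilon$ by the triangle inequality. The only difference is that you re-derive the step $d_\diamondsuit(A\cdot A^\dagger,B\cdot B^\dagger)\leq 2\left|\!\left|A-B\right|\!\right|_{op}$ from \Cref{lem:diam_max} and a rank-one telescoping, whereas the paper simply cites Lemma 12.6 of \cite{mixed_states_aharonov98}; your derivation of that step is also sound.
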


\noindent The \namecref{lem:diam} follows from $d_\diamondsuit(\Psi, U\cdot U^\dagger)\leq d_\diamondsuit(\Psi, \left|\!\left|\ket{g}\right|\!\right|^2 U\cdot U^\dagger)+ |\,\left|\!\left|\ket{g}\right|\!\right|^2-1|$ (an application of the triangle inequality for $d_\diamondsuit$), from the fact that partial trace contracts $d_\diamondsuit$ and from Lemma 12.6 of \cite{mixed_states_aharonov98}, according to which $d_{\diamondsuit}(A\cdot A^\dagger,B\cdot B^\dagger) \leq 2\left|\!\left|A-B\right|\!\right|_{op}$ if $\left|\!\left|A\right|\!\right|_{op}, \left|\!\left|B\right|\!\right|_{op}\leq 1$.

To prove \Cref{lem:op} we first prove the following:

\begin{mylemma}\label{lem:trace_dists} 
Let $\Psi, \Phi: L(\h)\to L(\h')$ be two linear maps.
$d_{tr}(\Psi, \Phi)\leq 2\,d_{tr}^\mathcal{D}(\Psi, \Phi)$.
\end{mylemma}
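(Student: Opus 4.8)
The plan is to treat this as a norm inequality for the single linear map $f := \Psi - \Phi$, namely $\|f\|_{tr} \leq 2\,\|f\|_{tr}^{\mathcal{D}}$, and to reduce the supremum over the whole trace-norm unit ball (which defines $d_{tr}$) to a weighted sum of contributions coming from density operators (which are controlled by $d_{tr}^{\mathcal{D}}$). Since $f$ is linear, it suffices to bound $\|f(X)\|_{tr}$ for an arbitrary $X \in L(\h)$ with $\|X\|_{tr} \leq 1$ and show it is at most $2\,d_{tr}^{\mathcal{D}}(\Psi,\Phi)$.

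First I would realify: write $X = X_0 + i X_1$ with Hermitian parts $X_0 = \tfrac12(X + X^\dagger)$ and $X_1 = \tfrac{1}{2i}(X - X^\dagger)$. Because $\|X^\dagger\|_{tr} = \|X\|_{tr}$, the triangle inequality gives $\|X_0\|_{tr}, \|X_1\|_{tr} \leq \|X\|_{tr}$. Then I apply the Jordan decomposition to each Hermitian part, $X_j = X_j^+ - X_j^-$ with $X_j^+, X_j^- \geq 0$ of orthogonal support, so that $\tr[X_j^+] + \tr[X_j^-] = \|X_j\|_{tr}$. Using linearity of $f$ and the triangle inequality for the trace norm, $\|f(X)\|_{tr}$ is then bounded by the sum of the four terms $\|f(X_0^+)\|_{tr}, \|f(X_0^-)\|_{tr}, \|f(X_1^+)\|_{tr}, \|f(X_1^-)\|_{tr}$.

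For each nonzero positive part I would normalize $\rho := X_j^{\pm}/\tr[X_j^{\pm}] \in \mathcal{D}(\h)$, so that by homogeneity $\|f(X_j^{\pm})\|_{tr} = \tr[X_j^{\pm}]\,\|f(\rho)\|_{tr} \leq \tr[X_j^{\pm}]\cdot d_{tr}^{\mathcal{D}}(\Psi,\Phi)$, invoking the definition \eqref{eq:def_op_tr}; a vanishing positive part contributes $0$ and is simply dropped. Summing, the total prefactor is $\tr[X_0^+]+\tr[X_0^-]+\tr[X_1^+]+\tr[X_1^-] = \|X_0\|_{tr}+\|X_1\|_{tr} \leq 2\|X\|_{tr} \leq 2$, giving $\|f(X)\|_{tr} \leq 2\,d_{tr}^{\mathcal{D}}(\Psi,\Phi)$; taking the supremum over $X$ completes the proof.

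I do not anticipate a genuine obstacle: this is the standard "split into real/imaginary and then positive/negative parts" argument, and the factor of $2$ is exactly the price of the Hermitian/anti-Hermitian split (it would be $1$ if inputs were restricted to Hermitian operators). The only step needing a moment's care is the normalization, where the vanishing positive parts must be excluded before dividing by their trace; everything else is the triangle inequality together with the identity $\|Y\|_{tr} = \tr[Y]$ for $Y \geq 0$.
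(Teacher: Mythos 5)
Your proof is correct, but it follows a genuinely different route from the paper's. The paper first reduces the supremum in \eqref{eq:def_tr} to rank-one inputs $\ket{u}\!\!\bra{v}$ via an extreme-point/convexity argument, and then applies a polarization identity, $\ket{u}\!\!\bra{v}=\frac{1}{4}\sum_{k=0}^{3}i^{k}\ket{w_k}\!\!\bra{w_k}$ with $\ket{w_k}=\ket{u}+i^{k}\ket{v}$, to write this rank-one operator as a complex combination of four positive rank-one operators of total weight $\frac{1}{4}\sum_k\|\ket{w_k}\|^{2}=2$. You instead decompose an arbitrary $X$ in the unit ball directly: Cartesian split into Hermitian parts $X_0,X_1$ (each of trace norm at most $\|X\|_{tr}$), then Jordan decomposition of each into orthogonal positive parts, yielding four positive operators of total trace at most $2\|X\|_{tr}$. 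Both arguments are sound and both locate the factor $2$ in the same place (the price of the Hermitian/anti-Hermitian split); the identity $\|Y\|_{tr}=\tr[Y]$ for $Y\geq 0$ and homogeneity then finish each. What your version buys is that it is more elementary and self-contained: it needs neither the characterization of the extreme points of the trace-norm unit ball nor the attainment of the supremum, and it works verbatim for any linear $f$. What the paper's version buys is consistency with the rest of the text, where evaluation on rank-one inputs $\ket{u}\!\!\bra{v}$ is the workhorse (e.g., in the proof of \Cref{lem:op}), so the polarization identity is already part of its toolkit. No gaps in your argument; the one point needing care (excluding vanishing positive parts before normalizing) you already flagged and handled.
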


\begin{proof}
For any pair $\ket{u},\ket{v}\in\h$ of unit vectors define the unnormalized $\ket{w_k}:=\ket{u}+i^k\ket{v}$ and observe that
\[
   \sum_{k=0}^3  \left|\!\left|\ket{w_k}\right|\!\right|^2=8 \quad \text{and} \quad \ket{u}\!\!\bra{v}=\frac{1}{4}\sum_{k=0}^3 i^k\ket{w_k}\!\!\bra{w_k}\text{,}
\]
because the roots of unity sum to zero, $\sum_{k=0}^3 i^k = 0$. For any linear $f: L(\h)\to L(\h')$ we get by the triangle inequality and the absolute homogeneity of trace norm
\begin{eqnarray*}
    \left|\!\left|f(\ket{u}\!\!\bra{v})\right|\!\right|_{tr}
    &\leq& \frac{1}{4}\sum_{k=0}^3  \left|\!\left|f(\ket{w_k}\!\!\bra{w_k})\right|\!\right|_{tr}\\
    &\leq& \frac{1}{4}\left(\sum_{k=0}^3  \left|\!\left|\ket{w_k}\right|\!\right|^2\right)
    \left|\!\left|f\right|\!\right|_{tr}^\mathcal{D}
     = 2 \left|\!\left|f\right|\!\right|_{tr}^\mathcal{D}\text{.}
\end{eqnarray*}

\noindent By a convexity argument\footnote{Operators of the form $\ket{u}\!\!\bra{v}$ are the extreme points of the convex set ${\{X\in L(\h); \left|\!\left|X\right|\!\right|_{tr}\leq 1\}}$ and $\left|\!\left|f(\cdot)\right|\!\right|_{tr}$ is a convex function; it maps convex set to a convex set and extreme points to extreme points.} there exist $\ket{u^*},\ket{v^*}\in\h$ that achieve the supremum of Eq. (\ref{eq:def_tr}). We get $\left|\!\left|f\right|\!\right|_{tr}=\left|\!\left|f(\ket{u^*}\!\!\bra{v^*})\right|\!\right|_{tr}\leq 2\left|\!\left|f\right|\!\right|_{tr}^\mathcal{D}$. Setting $f=\Psi-\Phi$ completes the proof. 
\end{proof}

\begin{proof}[Proof of \Cref{lem:op}]
By the definition of $d_{tr}$ in (\ref{eq:def_tr}) we have for all normalized $\ket{u},\ket{v}\in\h$
\begin{eqnarray*}
	d_{tr}(\Psi, U\cdot U^\dagger) &\geq& \left|\!\left|\Psi\left(\ket{u}\!\!\bra{v}\right)-U\ket{u}\!\!\bra{v}U^\dagger\right|\!\right|_{tr}\\*
	&=& \left|\!\left|U^\dagger\Psi\left(\ket{u}\!\!\bra{v}\right)U-\ket{u}\!\!\bra{v}\right|\!\right|_{tr}\left|\!\left|\,\ket{v}\!\!\bra{u}\,\right|\!\right|_{op}\\*
	&\geq& \left|\!\left|U^\dagger\Psi\left(\ket{u}\!\!\bra{v}\right)U\ket{v}\!\!\bra{u}-\ket{u}\!\!\bra{u}\right|\!\right|_{tr}\\*
	&\geq& \left|\bra{u}U^\dagger\Psi\left(\ket{u}\!\!\bra{v}\right)U\ket{v}-1\right|
\end{eqnarray*}

\noindent where we inserted $\left|\!\left|\,\ket{v}\!\!\bra{u}\,\right|\!\right|_{op}=1$ into the second line. The last two inequalities are $\left|\!\left|X\right|\!\right|_{tr}\left|\!\left|O\right|\!\right|_{op}\geq \left|\!\left|XO\right|\!\right|_{tr}$ and $\left|\!\left|X\right|\!\right|_{tr}\geq |\tr\left[X\right]|$ that hold for any $X, O\in L(\h)$ (see \cite[Lemma 10]{mixed_states_aharonov98}). When $\ket{u}=\ket{v}$ we can write $d_{tr}^\mathcal{D}$ instead of $d_{tr}$ (see definition \eqref{eq:def_op_tr}). Writing out the Stinespring-dilation for $\Psi$ we get
\begin{eqnarray}
     d_{tr}(\Psi, U\cdot U^\dagger)&\geq& \left|\tr\left(\ket{g_u}\!\!\bra{g_v}\right)-1\right|\label{eq:garbage_prod}\\
     \text{with} \ket{g_u}&:=&\big(\bra{u}U^\dagger\otimes\id_{\K'}\big)A\ket{u}\label{eq:def_g}
\end{eqnarray}
From \Cref{lem:norms} and $\Psi$ being CP trace-nonincreasing we have
\begin{equation}
    \left|\!\left|A\ket{u}\right|\!\right|^2\leq \left|\!\left|A\right|\!\right|^2_{op}=\left|\!\left|\Psi\right|\!\right|_\diamondsuit\leq 1 \label{eq:A}
\end{equation}
From the $d_{tr}^\mathcal{D}$ equivalent of (\ref{eq:garbage_prod}) and from (\ref{eq:def_g}) and (\ref{eq:A}) we get
\[
    1-d_{tr}^\mathcal{D}(\Psi, U\cdot U^\dagger)\leq \left|\!\left|\ket{g_u}\right|\!\right|^2 \leq \left|\!\left|A\right|\!\right|^2_{op}\leq 1\text{.}
\]
So the norm of $\ket{g_u}$ is withing the bounds required by \Cref{lem:op}. For any normalized $\ket{u},\ket{v}\in\h$ we can use (\ref{eq:A}) again to upper bound
\begin{eqnarray*}
	\left|\!\left|A\ket{v}-U\ket{v}\otimes \ket{g_u}\right|\!\right|^2 &=&\left|\!\left|A\ket{v}\right|\!\right|^2 + \left|\!\left|\ket{g_u}\right|\!\right|^2 -2\Re\left[\big(\bra{v}U^\dagger\otimes\bra{g_u}\big)A\ket{v}\right]\nonumber\\
	&\leq& 2\left(1-\Re\left<g_u|g_v\right>\right)\\
	&\leq& 2\,d_{tr}(\Psi, U\cdot U^\dagger)
\end{eqnarray*}

\noindent where $\Re$ stands for the real part of the complex number and the last inequality follows from (\ref{eq:garbage_prod}). Combining with \Cref{lem:trace_dists} we find that $\left|\!\left|A-U\otimes \ket{g_u}\right|\!\right|_{op}\leq 2\sqrt{d_{tr}^\mathcal{D}(\Psi, U\cdot U^\dagger)}$ with $\ket{g_u}$ is defined by Eq. (\ref{eq:def_g}) for any normalized $\ket{u}\in\h$.
\end{proof}

\section{Results: Postselected setting}
In the introduction we defined the trace-induced distance and the diamond distance for the postselected setting. For any postselection superoperators $\Psi, \Phi: L(\h)\to L(\h')$ we had
\begin{eqnarray}
        \widehat{d}_{tr}(\Psi, \Phi)&:=&\sup_{\rho\in \mathcal{D}(\h)}\underbrace{\left|\!\left|\frac{\Psi(\rho)}{\tr\left[\Psi(\rho)\right]}-\frac{\Phi(\rho)}{\tr\left[\Phi(\rho)\right]}\right|\!\right|_{tr}}_{f_{\Psi, \Phi}(\rho)}\label{eq:def_tr_post}\\
        \widehat{d}_{\diamondsuit}(\Psi, \Phi)&:=&\sup_{\K}\, \widehat{d}_{tr}(\Psi\otimes\mathcal{I}_\K, \Phi\otimes\mathcal{I}_\K)\text{,}\label{eq:def_diam_post}
\end{eqnarray}
where we added the shorthand $f_{\Psi,\Phi}$ to denote the objective function of the maximisation in (\ref{eq:def_tr_post}). In general, this function is not convex.

\begin{figure}
    \centering
    \begin{tikzpicture}
    \begin{axis}[
    domain     = 0:1,
    samples=100,
    height=3cm,
    width=6.5cm,
    scale only axis,
    axis lines=center,  
    xtick={0.5,1},
    ticklabel style={align=center},
    xticklabels={{$1/2$\\${\scriptstyle 1/2\ket{0}\!\bra{0}+1/2\ket{1}\!\bra{1}}$},{1\\${\scriptstyle\ket{1}\!\bra{1}}$}},
    extra x tick style={align=center},
    extra x ticks={{0}},
    extra x tick labels={{0\\${\scriptstyle\ket{0}\!\bra{0}}$}},
    xlabel={{\\[-0.25em]$p$\\${\scriptstyle \rho\,=\,(1-p)\ket{0}\!\bra{0}}$\\[-0.1em]${\scriptstyle +p\ket{1}\!\bra{1}}$}},
    ylabel={$f_{{\Psi_\epsilon},{\Phi_\epsilon}}(\rho)$},
    xlabel style={below, align=center},
    ylabel style={right},
    xmin=-0.1,
    xmax=1.3,
    ymin=-0.05,
    ymax=2.2,
    legend style={draw=none,at={(1,0.35)},anchor=south},
    legend cell align={left},
      ]
    \pgfplotsinvokeforeach{32, 8, 4}{
    \addplot [mark = none,dashes={#1/6pt}]
     {(2*(1-x)*x*(1-2*(1/#1)))/((1-x)*x+(1/#1)*(1-2*x)^2-(1/#1)^2*(1-2*x)^2)};
    \addlegendentryexpanded{$\epsilon=1/#1$}
}
  \end{axis}
\end{tikzpicture}
    \caption{The objective function $f_{\Psi, \Phi}$ \eqref{eq:def_tr_post} is not always maximized at the pure states. For example when $\Psi=\Psi_\epsilon$ with probability $1-\epsilon$ projects its input onto $\ket{0}\!\!\bra{0}$ and $\Phi=\Phi_\epsilon$ onto $\ket{1}\!\!\bra{1}$ (see the proof of \Cref{lem:convex}).}
    \label{fig_concave}
\end{figure}

\begin{myclaim}\label{lem:convex}
    There exist $\Psi, \Phi$ such that $f_{\Psi,\Phi}$ is not convex on $\mathcal{D}(\h)$.
\end{myclaim}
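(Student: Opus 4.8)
The plan is to exhibit the explicit qubit superoperators sketched in \Cref{fig_concave} and then defeat convexity with a three-point argument along the diagonal states. On $\h=\h'=\mathbb{C}^2$ I would fix $\epsilon\in(0,1/2)$ and set
\[
\Psi_\epsilon(\rho):=(1-\epsilon)\ket{0}\!\!\bra{0}\rho\ket{0}\!\!\bra{0}+\epsilon\ket{1}\!\!\bra{1}\rho\ket{1}\!\!\bra{1},\qquad
\Phi_\epsilon(\rho):=\epsilon\ket{0}\!\!\bra{0}\rho\ket{0}\!\!\bra{0}+(1-\epsilon)\ket{1}\!\!\bra{1}\rho\ket{1}\!\!\bra{1}.
\]
First I would verify these are genuine postselection superoperators: each is manifestly CP (the displayed expressions are already Kraus decompositions), trace-nonincreasing because $\sum_i K_i^\dagger K_i\preceq\id$, and its postselection probability $\tr[\Psi_\epsilon(\rho)]=(1-\epsilon)\bra{0}\rho\ket{0}+\epsilon\bra{1}\rho\ket{1}$ is bounded below by $\epsilon>0$ for every $\rho\in\mathcal{D}(\h)$ (symmetrically for $\Phi_\epsilon$), so both postselection probabilities never vanish.

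Next I would restrict attention to the segment of diagonal inputs $\rho=(1-p)\ket{0}\!\!\bra{0}+p\ket{1}\!\!\bra{1}$, $p\in[0,1]$, which suffices since non-convexity on any line segment inside $\mathcal{D}(\h)$ already implies non-convexity on $\mathcal{D}(\h)$. On this segment both renormalized outputs stay diagonal, so $f_{\Psi_\epsilon,\Phi_\epsilon}(\rho)$ reduces to twice the absolute difference of their $\ket{0}\!\!\bra{0}$-weights; a short computation then yields the closed form plotted in \Cref{fig_concave}, namely
\[
f_{\Psi_\epsilon,\Phi_\epsilon}(\rho)=\frac{2(1-p)p(1-2\epsilon)}{(1-p)p+\epsilon(1-2p)^2-\epsilon^2(1-2p)^2}.
\]

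Finally, convexity fails already at three collinear points, and I would not even need the closed form. At the pure inputs $p=0$ and $p=1$ the two renormalized outputs coincide (both $\ket{0}\!\!\bra{0}$, respectively both $\ket{1}\!\!\bra{1}$), so $f_{\Psi_\epsilon,\Phi_\epsilon}=0$ there; at the midpoint $\rho=\id/2$ the outputs are $(1-\epsilon)\ket{0}\!\!\bra{0}+\epsilon\ket{1}\!\!\bra{1}$ and $\epsilon\ket{0}\!\!\bra{0}+(1-\epsilon)\ket{1}\!\!\bra{1}$, whose trace distance is $2(1-2\epsilon)>0$. Since $f_{\Psi_\epsilon,\Phi_\epsilon}(\id/2)>0=\tfrac12 f_{\Psi_\epsilon,\Phi_\epsilon}(\ket{0}\!\!\bra{0})+\tfrac12 f_{\Psi_\epsilon,\Phi_\epsilon}(\ket{1}\!\!\bra{1})$, the function cannot be convex; this simultaneously illustrates (as \Cref{fig_concave} advertises) that the objective is maximized in the interior rather than at the pure states. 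I expect no real obstacle here: the construction is handed to us by the figure and the argument is an explicit computation plus the trivial three-point test. The only point requiring any care is confirming that $\Psi_\epsilon,\Phi_\epsilon$ retain strictly positive postselection probability on \emph{all} (not merely diagonal) inputs, but the uniform bound $\tr[\Psi_\epsilon(\rho)]\geq\epsilon$ settles that at once.
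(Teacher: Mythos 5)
Your proposal is correct and follows essentially the same route as the paper: the identical pair $\Psi_\epsilon,\Phi_\epsilon$ and the same three-point test at $\ket{0}\!\!\bra{0}$, $\ket{1}\!\!\bra{1}$, and $\tfrac12\id$, where the values $0$, $0$, and $2-4\epsilon$ defeat convexity. The extra verification that these are genuine postselection superoperators (uniform lower bound $\epsilon$ on the postselection probability) is a welcome but minor addition to what the paper states without proof.
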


\begin{proof}
Consider the maps ${\Psi_\epsilon}, {\Phi_\epsilon}:L(\mathbb{C}^2)\to L(\mathbb{C}^2)$ where ${\Psi_\epsilon}(\rho)=(1-\epsilon)\ket{0}\!\!\bra{0}\rho\ket{0}\!\!\bra{0}+\epsilon\ket{1}\!\!\bra{1}\rho\ket{1}\!\!\bra{1}$ and ${\Phi_\epsilon}(\rho)=(1-\epsilon)\ket{1}\!\!\bra{1}\rho\ket{1}\!\!\bra{1}+\epsilon\ket{0}\!\!\bra{0}\rho\ket{0}\!\!\bra{0}$, for some $\epsilon\in(0,\frac{1}{2})$. 
Both are postselection superoperators so $f_{{\Psi_\epsilon},{\Phi_\epsilon}}$ is well-defined on  $\mathcal{D}(\mathbb{C}^2)$. Observe that ${f_{{\Psi_\epsilon},{\Phi_\epsilon}}(\ket{0}\!\!\bra{0})=0}$, $f_{{\Psi_\epsilon},{\Phi_\epsilon}}(\ket{1}\!\!\bra{1})=0$, and $f_{{\Psi_\epsilon},{\Phi_\epsilon}}(\frac{1}{2}\ket{0}\!\!\bra{0}+\frac{1}{2}\ket{1}\!\!\bra{1})=2-4\epsilon$, therefore this $f_{{\Psi_\epsilon},{\Phi_\epsilon}}$ is not convex. See Fig. \ref{fig_concave}.
\end{proof}

It is still possible to replace the suprema in the definitions by maxima. In (\ref{eq:def_tr_post}) this is because $\mathcal{D}(\h)$ is compact and $f_{\Psi,\Phi}$ is continuous. The supremum in (\ref{eq:def_diam_post}) is attained by $\K=\h$, similarly as for its non-postselected predecessor, $\sup_{\K}d_{tr}^\mathcal{D}(\Psi\otimes\mathcal{I}_\K, \Phi\otimes\mathcal{I}_\K)$ \cite[Theorem 5]{watrous2005notes}. The proof is also similar, except for an additional step circumventing the possible non-convexity of $f_{\Psi,\Phi}$.

\begin{myclaim}\label{lem:diam_post_max}
    Let $\Psi, \Phi$ be postselection superoperators on $L(\h)$. Then $\widehat{d}_{\diamondsuit}(\Psi, \Phi)=\widehat{d}_{tr}(\Psi\otimes\mathcal{I}_\h, \Phi\otimes\mathcal{I}_\h)$.
\end{myclaim}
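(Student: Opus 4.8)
The inequality ``$\geq$'' is immediate, since $\K=\h$ is one of the spaces over which the supremum in \eqref{eq:def_diam_post} is taken. The whole content is the reverse inequality: for every Hilbert space $\K$ and every $\rho\in\mathcal{D}(\h\otimes\K)$ I must produce a state $\sigma\in\mathcal{D}(\h\otimes\h)$ with $f_{\Psi\otimes\mathcal{I}_\h,\Phi\otimes\mathcal{I}_\h}(\sigma)\geq f_{\Psi\otimes\mathcal{I}_\K,\Phi\otimes\mathcal{I}_\K}(\rho)$. In the non-postselected case one uses convexity of $\rho\mapsto\left|\!\left|(f\otimes\mathcal{I}_\K)(\rho)\right|\!\right|_{tr}$ to reduce the supremum to an extreme point, i.e.\ a pure input, whose Schmidt rank across the cut $\h\mid\K$ is at most $\dim\h$; this is exactly what lets one shrink $\K$ to $\h$ (this convexity underlies \Cref{lem:diam_max}). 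The hard part, and the only real difference from the standard proof, is that by \Cref{lem:convex} the postselected objective $f_{\Psi,\Phi}$ need not be convex, so I cannot restrict to pure inputs this way; a mixed $\rho$ with high-rank marginal on a large $\K$ admits no such compression.

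My plan is to reach pure inputs by purification rather than by convexity. Fix $\K$ and $\rho$, set $p:=\tr\left[(\Psi\otimes\mathcal{I}_\K)(\rho)\right]$ and $q:=\tr\left[(\Phi\otimes\mathcal{I}_\K)(\rho)\right]$ (both positive because $\Psi\otimes\mathcal{I}_\K$ and $\Phi\otimes\mathcal{I}_\K$ are again postselection superoperators), and let $\ket\psi\in\h\otimes\K\otimes\mathcal{R}$ be a purification of $\rho$. The key observation is that purification leaves the unnormalised outputs untouched after tracing out $\mathcal{R}$: because $\Psi\otimes\mathcal{I}_\K$ acts trivially on $\mathcal{R}$, one has $(\Psi\otimes\mathcal{I}_\K)(\rho)=\tr_{\mathcal{R}}\big[(\Psi\otimes\mathcal{I}_{\K\otimes\mathcal{R}})(\ket\psi\!\bra\psi)\big]$, and likewise for $\Phi$; in particular the postselection probabilities $p,q$ computed from $\ket\psi$ coincide with those computed from $\rho$. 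Hence the two difference operators whose trace norms are $f_{\Psi,\Phi}(\rho)$ and $f_{\Psi,\Phi}(\ket\psi\!\bra\psi)$ carry the \emph{same} normalisation $\tfrac1p,\tfrac1q$, and the former is the partial trace over $\mathcal{R}$ of the latter. Since the partial trace is trace-preserving and CP, \Cref{lem:contract} gives $f_{\Psi\otimes\mathcal{I}_\K,\Phi\otimes\mathcal{I}_\K}(\rho)\leq f_{\Psi\otimes\mathcal{I}_{\K\otimes\mathcal{R}},\Phi\otimes\mathcal{I}_{\K\otimes\mathcal{R}}}(\ket\psi\!\bra\psi)$. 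The matching of normalisations is the crux: it is what lets me invoke contractivity on the bare difference operator and thereby sidestep non-convexity.

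It remains to compress the pure input. The state $\ket\psi$ has Schmidt rank at most $\dim\h$ across the cut $\h\mid(\K\otimes\mathcal{R})$, so its Schmidt vectors on the second factor span a subspace of dimension at most $\dim\h$; choose an isometry $V:\h\to\K\otimes\mathcal{R}$ whose image contains this subspace and set $\ket u:=(\id_\h\otimes V^\dagger)\ket\psi\in\h\otimes\h$, so that $\ket\psi=(\id_\h\otimes V)\ket u$. Because $\Psi$ and $\Phi$ act only on the first tensor factor, $V$ commutes through them: $(\Psi\otimes\mathcal{I}_{\K\otimes\mathcal{R}})(\ket\psi\!\bra\psi)=(\id_{\h'}\otimes V)(\Psi\otimes\mathcal{I}_\h)(\ket u\!\bra u)(\id_{\h'}\otimes V^\dagger)$, and similarly for $\Phi$. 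Conjugation by the isometry $\id_{\h'}\otimes V$ preserves both trace (so the normalisations are again unchanged, and are exactly the self-consistent ones for $\ket u$) and trace norm, whence $f_{\Psi\otimes\mathcal{I}_{\K\otimes\mathcal{R}},\Phi\otimes\mathcal{I}_{\K\otimes\mathcal{R}}}(\ket\psi\!\bra\psi)=f_{\Psi\otimes\mathcal{I}_\h,\Phi\otimes\mathcal{I}_\h}(\ket u\!\bra u)$. Chaining the two estimates yields $f_{\Psi\otimes\mathcal{I}_\K,\Phi\otimes\mathcal{I}_\K}(\rho)\leq f_{\Psi\otimes\mathcal{I}_\h,\Phi\otimes\mathcal{I}_\h}(\ket u\!\bra u)\leq\widehat{d}_{tr}(\Psi\otimes\mathcal{I}_\h,\Phi\otimes\mathcal{I}_\h)$, and taking the supremum over $\rho$ and $\K$ gives ``$\leq$'', completing the proof. (Minor points I would check along the way: enlarging $\mathcal{R}$ if needed so that $\dim(\K\otimes\mathcal{R})\geq\dim\h$ to accommodate $V$, and that $\Psi\otimes\mathcal{I}$ remains a postselection superoperator so every denominator stays positive.)
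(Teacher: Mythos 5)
Your proposal is correct and follows essentially the same route as the paper: the paper likewise first reduces to a pure input by purifying $\rho$ and using that the partial trace over the reference preserves the postselection probabilities while contracting the trace norm (its Eq.~\eqref{eq:purification_trick}), and then compresses the auxiliary register to $\h$ via the Schmidt decomposition and an isometry that leaves both the normalisations and the trace norm unchanged (its Eq.~\eqref{eq:schmidt_trick}). The only difference is cosmetic ordering: the paper states the Schmidt-compression step first and the purification step second before chaining them, exactly as you do.
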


\begin{proof} The $\geq$ direction follows from definition \eqref{eq:def_diam_post}. We prove here the $\leq$ direction.

First, we claim that for any normalized $\ket{v}\in\h\otimes\K$, where $\K$ is of a higher dimension than $\h$, there exists a normalized $\ket{u}\in\h\otimes\h$ such that
\begin{equation}
    f_{\Psi\otimes\mathcal{I}_\K,\Phi\otimes\mathcal{I}_\K}(\ket{v}\!\!\bra{v})
    = f_{\Psi\otimes\mathcal{I}_\h,\Phi\otimes\mathcal{I}_\h}(\ket{u}\!\!\bra{u})\text{.}\label{eq:schmidt_trick}
\end{equation}
The proof is standard:
Since the Schmidt decomposition of $\ket{v}$ has at most $\dim(\h)$ terms, there exists $\ket{u}\in\h\otimes\h$ such that $\ket{v}=\left(\id_\h\otimes\, U\right) \ket{u}$ with $U:\h\to\K$ an isometry; $U^\dagger U=\id_\h$. Plugging this in for $\ket{v}$, noting that superoperators acting on different registers commute and that $\left|\!\left|(\id_\h\otimes U)\,\cdot\, (\id_\h\otimes U^\dagger)\right|\!\right|_{tr}=\left|\!\left|\cdot \right|\!\right|_{tr}$ (for example, by \Cref{lem:contract} applied in both directions) we get Eq. (\ref{eq:schmidt_trick})\footnote{The entire proof would stop here if the objective function was convex: Since pure states are the extreme points of the convex $\mathcal{D}(\h\!\otimes\!\K)$, we would have some $\ket{v}\!\!\bra{v}\in \mathcal{D}(\h\!\otimes\!\K)$ achieve each \protect{$\widehat{d}_{tr}$} in (\protect{\ref{eq:def_diam_post}}), similarly as in the footnote on the previous page.}.

Next, we prove that for any postselection superoperators $A, B:L(\mathcal{N})\to L(\mathcal{M})$ and for any $\rho\in \mathcal{D}(\mathcal{N})$ there exists $\ket{v}\in\mathcal{N}\otimes\mathcal{N}$ such that
\begin{equation}
    f_{A, B}(\rho)\leq 
    f_{A\otimes \mathcal{I}_{\mathcal{N}},B\otimes\mathcal{I}_{\mathcal{N}}}
    (\ket{v}\!\!\bra{v})\text{.}\label{eq:purification_trick}
\end{equation}
We choose a $\ket{v}$ that is a purification of $\rho$, substituting $\rho=\tr_\mathcal{N}(\ket{v}\!\!\bra{v})$ into the left-hand side. The inequality follows by commutation and by
$\left|\!\left|\tr_\mathcal{N}(\,\cdot\,) \right|\!\right|_{tr}\leq \left|\!\left|\cdot \right|\!\right|_{tr}$ (\Cref{lem:contract}).

Finally, we set $A=\Psi \otimes \mathcal{I}_{\K'}$, $B=\Phi \otimes \mathcal{I}_{\K'}$ in (\ref{eq:purification_trick}) and to the right-hand side we apply Eq. (\ref{eq:schmidt_trick}) with $\K=\K'\otimes\h\otimes\K'$. We get $f_{\Psi\otimes\mathcal{I}_{\K'},\Phi\otimes\mathcal{I}_{\K'}}(\rho)\leq \max_{\ket{u}\in\h\otimes\h} f_{\Psi\otimes\mathcal{I}_\h,\Phi\otimes\mathcal{I}_\h}(\ket{u}\!\!\bra{u})$
for any $\K'$ and $\rho\in\mathcal{\h\otimes\K'}$. Taking suprema over them completes the proof.
\end{proof}

\subsection{Weak subadditivity and weak contractivity}\label{sec:additivity}

In this section we discuss how $\widehat{d}_\diamondsuit$ behaves with respect to composition of superoperators.

\begin{posttheorem}[Weak subadditivity]\label{lem:chaining_post}
Let $\Psi, \Phi: L(\h)\to L(\h')$ and $\Psi',\Phi' : L(\h'\otimes\K')\to L(\h'')$ be postselection superoperators, $\Phi'$ being trace-preserving.
Then 
\begin{equation}
    \widehat{d}_\diamondsuit (\Psi'\!\circ \left( \Psi\otimes \mathcal{I}_{\K'}\right), \Phi'\!\circ \left(\Phi\otimes \mathcal{I}_{\K'}\right))
    \leq  \widehat{d}_\diamondsuit (\Psi', \Phi') + \widehat{d}_\diamondsuit (\Psi, \Phi)\text{,}\label{eq:chaining}
\end{equation}
\noindent where $\mathcal{I}_{\K'}: L(\K')\to L(\K')$ is the identity superoperator.
\end{posttheorem}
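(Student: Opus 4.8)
The plan is to run the classical three-map argument for composition of diamond distances, checking at each step that the renormalization in \eqref{eq:def_tr_post} does not spoil it. Write $\Theta:=\Psi'\circ(\Psi\otimes\mathcal{I}_{\K'})$ and $\Xi:=\Phi'\circ(\Phi\otimes\mathcal{I}_{\K'})$ for the two composed maps, and introduce the hybrid $\Xi':=\Phi'\circ(\Psi\otimes\mathcal{I}_{\K'})$. Since $\widehat{d}_\diamondsuit$ obeys the triangle inequality, $\widehat{d}_\diamondsuit(\Theta,\Xi)\leq\widehat{d}_\diamondsuit(\Theta,\Xi')+\widehat{d}_\diamondsuit(\Xi',\Xi)$, so it suffices to prove the two one-sided bounds $\widehat{d}_\diamondsuit(\Theta,\Xi')\leq\widehat{d}_\diamondsuit(\Psi',\Phi')$ and $\widehat{d}_\diamondsuit(\Xi',\Xi)\leq\widehat{d}_\diamondsuit(\Psi,\Phi)$ and add them. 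I would establish each first at the level of $\widehat{d}_{tr}$ and then stabilize by tensoring with $\mathcal{I}_\K$ and taking $\sup_\K$, exactly as in \Cref{lem:diam_post_max}.

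For the outer replacement, fix $\rho\in\mathcal{D}(\h\otimes\K')$ and put $\sigma:=(\Psi\otimes\mathcal{I}_{\K'})(\rho)$. Note that $\Psi\otimes\mathcal{I}_{\K'}$ is itself a postselection superoperator: its postselection probability equals $\tr[\Psi(\tr_{\K'}\rho)]>0$, so $\tr\sigma>0$ and $\hat\sigma:=\sigma/\tr\sigma\in\mathcal{D}(\h'\otimes\K')$. In $f_{\Theta,\Xi'}(\rho)$ both renormalized outputs are evaluated on the same $\sigma$, and each normalized expression $\Psi'(\sigma)/\tr[\Psi'(\sigma)]$, $\Phi'(\sigma)/\tr[\Phi'(\sigma)]$ is invariant under the rescaling $\sigma\mapsto\hat\sigma$ by homogeneity. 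Hence $f_{\Theta,\Xi'}(\rho)=f_{\Psi',\Phi'}(\hat\sigma)\leq\widehat{d}_{tr}(\Psi',\Phi')$, and no assumption on $\Phi'$ enters here.

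The inner replacement is the crux and the only place the trace-preservation of $\Phi'$ is used. Fix $\rho$ and set $\sigma_1:=(\Psi\otimes\mathcal{I}_{\K'})(\rho)$, $\sigma_2:=(\Phi\otimes\mathcal{I}_{\K'})(\rho)$. Because $\Phi'$ is trace-preserving, $\tr[\Phi'(\sigma_i)]=\tr\sigma_i$, so the two post-$\Phi'$ normalization denominators coincide with the pre-$\Phi'$ ones. This makes the nonlinear normalization commute with the linear $\Phi'$:
\begin{equation*}
    f_{\Xi',\Xi}(\rho)=\left\|\Phi'\!\left(\frac{\sigma_1}{\tr\sigma_1}-\frac{\sigma_2}{\tr\sigma_2}\right)\right\|_{tr}\leq\left\|\frac{\sigma_1}{\tr\sigma_1}-\frac{\sigma_2}{\tr\sigma_2}\right\|_{tr}=f_{\Psi\otimes\mathcal{I}_{\K'},\,\Phi\otimes\mathcal{I}_{\K'}}(\rho)\text{,}
\end{equation*}
where the inequality is the contraction of the trace norm by the CP trace-nonincreasing $\Phi'$ (\Cref{lem:contract}). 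Taking the supremum over $\rho$ gives $\widehat{d}_{tr}(\Xi',\Xi)\leq\widehat{d}_{tr}(\Psi\otimes\mathcal{I}_{\K'},\Phi\otimes\mathcal{I}_{\K'})$.

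It remains to stabilize. Tensoring every map with $\mathcal{I}_\K$ preserves the relevant structure: $\Phi'\otimes\mathcal{I}_\K$ stays trace-preserving and $(\Psi\otimes\mathcal{I}_{\K'})\otimes\mathcal{I}_\K=\Psi\otimes\mathcal{I}_{\K'\otimes\K}$, so both $\widehat{d}_{tr}$-bounds hold verbatim for the tensored maps. Taking $\sup_\K$ then yields $\widehat{d}_\diamondsuit(\Theta,\Xi')\leq\widehat{d}_\diamondsuit(\Psi',\Phi')$ and, since $\K'\otimes\K$ ranges within the stabilizing spaces, $\widehat{d}_\diamondsuit(\Xi',\Xi)\leq\widehat{d}_\diamondsuit(\Psi,\Phi)$. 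Adding the two and invoking the triangle inequality completes the proof. The main obstacle is conceptual rather than computational: pinning down that trace-preservation of $\Phi'$ is exactly what forces $\tr[\Phi'(\sigma_1)]$ and $\tr[\Phi'(\sigma_2)]$ to equal $\tr\sigma_1,\tr\sigma_2$, thereby letting the renormalization pass through $\Phi'$; without it the two denominators decouple, the linearity argument fails, and a genuine counterexample exists (as discussed in this section).
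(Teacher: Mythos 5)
Your proposal is correct and follows essentially the same route as the paper: your hybrid map $\Phi'\circ(\Psi\otimes\mathcal{I}_{\K'})$ has renormalized output $\bar{\Phi}'(\rho')$ with $\rho'=\bar{\Psi}(\rho)/\tr[\bar{\Psi}(\rho)]$, which is exactly the intermediate point the paper inserts via the triangle inequality inside the trace norm, and both arguments use trace-preservation of $\Phi'$ to pull the normalization through and \Cref{lem:contract} to contract the second term. The only difference is presentational (you apply the triangle inequality at the level of the pseudometric rather than pointwise in $\rho$), so no further comparison is needed.
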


\noindent Setting $\K'$ trivial and $\Psi'=\Phi'=:\tau$ we immediately get that $\widehat{d}_\diamondsuit$ contracts under the composition with a {\it trace-preserving} postselection superoperator:

\begin{mycor}[Weak contractivity] \label{lem:contract_post}
\!With $\Psi, \Phi$ as before, let ${\tau: L(\h')\to L(\h'')}$ be a trace-pre\-serving postselection superoperator. Then $\widehat{d}_\diamondsuit (\tau \circ \Psi, \tau \circ \Phi) \leq  \widehat{d}_\diamondsuit (\Psi, \Phi)$.
\end{mycor}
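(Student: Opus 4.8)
The plan is to derive this corollary as an immediate specialization of the weak subadditivity theorem, \Cref{lem:chaining_post}, which we may assume. The idea is to collapse the two-layer composition appearing there into a single trace-preserving layer by making the ancilla register $\K'$ trivial and forcing the two outer maps to coincide.

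Concretely, I would take $\K'$ to be one-dimensional, so that $\mathcal{I}_{\K'}$ drops out and $\Psi\otimes\mathcal{I}_{\K'}=\Psi$, $\Phi\otimes\mathcal{I}_{\K'}=\Phi$; I would then set $\Psi'=\Phi'=:\tau$. The single hypothesis of \Cref{lem:chaining_post} that must be verified is that the outer ideal superoperator $\Phi'$ is trace-preserving, and here $\Phi'=\tau$ is trace-preserving precisely by the assumption of the corollary. Under these substitutions the left-hand side of \eqref{eq:chaining} becomes $\widehat{d}_\diamondsuit(\tau\circ\Psi, \tau\circ\Phi)$ while the right-hand side becomes $\widehat{d}_\diamondsuit(\tau, \tau)+\widehat{d}_\diamondsuit(\Psi, \Phi)$.

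To conclude, I would use that $\widehat{d}_\diamondsuit$ is a pseudometric, hence vanishes on identical arguments: $\widehat{d}_\diamondsuit(\tau,\tau)=0$. This leaves exactly $\widehat{d}_\diamondsuit(\tau\circ\Psi, \tau\circ\Phi)\leq\widehat{d}_\diamondsuit(\Psi, \Phi)$, as desired. One should also check, so that the left-hand distance is even well defined, that $\tau\circ\Psi$ and $\tau\circ\Phi$ are again postselection superoperators: complete positivity is preserved under composition, and since $\tau$ is trace-preserving the postselection probability is unchanged, $\tr[\tau(\Psi(\rho))]=\tr[\Psi(\rho)]$, which remains nonzero for every $\rho\in\mathcal{D}(\h)$ and at most $1$.

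Because all the real work lives inside \Cref{lem:chaining_post}, there is essentially no obstacle within the corollary itself. The only points requiring any care are confirming that the trace-preserving hypothesis on $\Phi'$ is met by $\tau$ and that the degenerate choices (a trivial $\K'$ and $\Psi'=\Phi'$) are legitimate instances of \Cref{lem:chaining_post}.
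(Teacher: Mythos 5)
Your proposal is correct and is exactly the paper's own derivation: the text preceding the corollary says ``Setting $\K'$ trivial and $\Psi'=\Phi'=:\tau$ we immediately get'' the result, which is precisely your specialization of \Cref{lem:chaining_post} together with $\widehat{d}_\diamondsuit(\tau,\tau)=0$. Your extra check that $\tau\circ\Psi$ and $\tau\circ\Phi$ remain postselection superoperators is a harmless (and reasonable) addition the paper leaves implicit.
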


\noindent Interestingly, \Cref{lem:contract_post}, and consequently \Cref{lem:chaining_post}, do not hold if $\tau$, $\Phi'$ are not trace-preserving. 
Consider the following two postselection superoperators, constant on $\rho\in\mathcal{D}(\mathbb{C}^3)$:
$\Psi(\rho)=\tfrac{1}{2}\ket{0}\!\!\bra{0}+\tfrac{1}{2}\ket{1}\!\!\bra{1}$ and
$\Phi(\rho)=\tfrac{1}{2}\ket{0}\!\!\bra{0}+\tfrac{1}{2}\ket{2}\!\!\bra{2}$ (if $\rho$ is outside $\mathcal{D}(\mathbb{C}^3)$ scale the outputs by $\tr\rho$). Now let $\tau(\rho)=(1-\epsilon)\Pi\rho\Pi+\epsilon \rho$ with $\Pi=\ket{1}\!\!\bra{1}+\ket{2}\!\!\bra{2}$ and $\epsilon \in(0, 1).$ 
Observe that $\widehat{d}_\diamondsuit (\Psi, \Phi)=1$ but $\widehat{d}_\diamondsuit (\tau \circ \Psi, \tau \circ \Phi)=\frac{2}{1+\epsilon}>2-2\epsilon$, i.e. this $\tau$ increases the distance! See Fig. \ref{fig_non-additive}.

\begin{figure}
    \centering
    \begin{tikzpicture}
\pgfplotsset{compat=1.17}
\begin{axis}
    [
    height=3.5cm,
    width=6.5cm,
    scale only axis,
    axis lines=center,  
    xtick={0.5,1},
    xticklabels={1/2,1},
    ytick={0.5,1},
    yticklabels={1/2,1},
    ztick={0.5},
    zticklabel style={right},
    zticklabels={1/2},
    extra x ticks={0},
    xmin=-0.1,
    xmax=1.2,
    ymin=-0.05,
    ymax=1.2,
    zmin=-0.1,
    zmax=0.65,
    view={120}{20},
    after end axis/.code={
       \draw[color=black!50!white,very thick,-{Stealth}] (axis cs:0,0,0) -- (axis cs:0.5,0,0.5);
       \draw[color=black!50!white,very thick,-{Stealth}] (axis cs:0,0,0) -- (axis cs:0,0.5,0.5);
       \draw[color=blue5,very thick,-{Stealth}] (axis cs:0,0,0) -- (axis cs:1,0,0);
       \draw[color=blue5,very thick,-{Stealth}] (axis cs:0,0,0) -- (axis cs:0,1,0);
       \draw[color=blue1,style=dashed,ultra thick,-{Stealth}] (axis cs:0,0,0) -- (axis cs:0.5,0,0);
       \draw[color=blue1,style=dashed,ultra thick,-{Stealth}] (axis cs:0,0,0) -- (axis cs:0,0.5,0);
       \node[] at (axis cs:1.3,0,0.02) {$x$};
       \node[] at (axis cs:0,1.25,0) {$y$};
       \node[] at (axis cs:0,0.05,0.7) {$z$};
       }
    ]
\addplot3[
    color=black,
    style=dashed,
] coordinates
{(0.5,0,0.5) (0,0,0.5) (0,0.5,0.5)};
\addplot3[
    color=black,
    style=dashed,
] coordinates
{(0.5,0,0.5) (0.5,0,0)};
\addplot3[
    color=black,
    style=dashed,
] coordinates
{(0,0.5,0.5) (0,0.5,0)};
\end{axis}
\end{tikzpicture}
    \caption{As opposed to $d_\diamondsuit$, the renormalisation-containing $\widehat{d}_\diamondsuit$ is not contractive. We can visualize the counterexample in the text by probability vectors $p\in\mathbb{R}^3$, because all the density matrices of interest are of the form $\operatorname{diag}(p)$. The diamond distance is the $L_1$ distance of the corresponding vector pair. It \emph{grows} after the grey pair is mapped by $\tau$ ($\epsilon\to 0$) to the dashed light blue pair \emph{and} renormalized to the dark blue.}
    \label{fig_non-additive}
\end{figure}
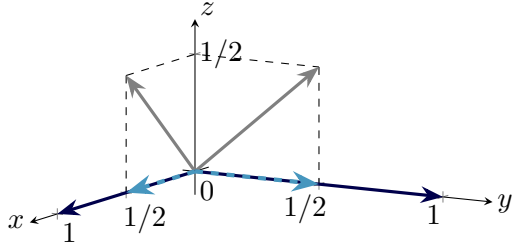

\begin{proof}[Proof of \Cref{lem:chaining_post}]
For shorthand, denote $\bar{\Psi}:=\Psi\otimes \mathcal{I}_{\K'}\otimes \mathcal{I}_{\K''}$ and similarly for $\bar{\Phi}$, denote $\bar{\Psi}':=\Psi'\otimes \mathcal{I}_{\K''}$ and similarly for $\bar{\Phi}'$.
Then by definition \eqref{eq:def_diam_post} the left-hand side of inequality (\ref{eq:chaining}) is equal to
\[
\sup_{\substack{\K''\\\rho \in \mathcal{D}
(\h\otimes\K'\otimes \K'')}}\underbrace{\left|\!\left|\frac{\bar{\Psi}'\left(\bar{\Psi}(\rho)\right)}
{\tr \left[\bar{\Psi}'\left(\bar{\Psi}(\rho)\right)\right]}
-\bar{\Phi}'\left(\frac{\bar{\Phi}(\rho)}{\tr\left[\bar{\Phi}(\rho)\right]}\right)\right|\!\right|_{tr}}_{h(\rho)}\text{,}
\]
because $\bar{\Phi}'$ is trace-preserving and linear. Let $\rho'= \bar{\Psi}(\rho)/\tr\left[\bar{\Psi}(\rho)\right]$. From the linearity of $\bar{\Psi}'$, the argument of the supremum is equal to
\begin{eqnarray*}
	h(\rho)\!&=&\left|\!\left|\frac{\bar{\Psi}'\left(\rho'\right)}
	{\tr\left[\bar{\Psi}'\left(\rho'\right)\right]}-
	\bar{\Phi}'\!\left(\frac{\bar{\Phi}(\rho)}{\tr\left[\bar{\Phi}(\rho)\right]}\right)\right|\!\right|_{tr}\\
	&\leq&\left|\!\left|\frac{\bar{\Psi}'\left(\rho'\right)}
	{\tr\left[\bar{\Psi}'\left(\rho'\right)\right]}-\bar{\Phi}'(\rho')\right|\!\right|_{tr}\!\!+\left|\!\left|\bar{\Phi}'\!\left(\rho'-\frac{\bar{\Phi}(\rho)}{\tr\left[\bar{\Phi}(\rho)\right]}\right)\right|\!\right|_{tr}\\
	&\leq&\left|\!\left|\frac{\bar{\Psi}'\left(\rho'\right)}
	{\tr\left[\bar{\Psi}'\left(\rho'\right)\right]}\!-\bar{\Phi}'(\rho')\right|\!\right|_{tr}\!\!+\left|\!\left|\frac{\bar{\Psi}(\rho)}
	{\tr\left[\bar{\Psi}(\rho)\right]}\!-\!\frac{\bar{\Phi}(\rho)}{\tr\left[\bar{\Phi}(\rho)\right]}\right|\!\right|_{tr}
\end{eqnarray*}

\noindent because $\bar{\Phi}'$ contracts the trace norm (\Cref{lem:contract}). We get (\ref{eq:chaining}) by taking the supremum.
\end{proof}

\subsection{Inequalities via conversion}

To prove inequalities in the postselected setting, sometimes it is possible to go back to the standard setting and use the results that hold there. Here we apply this method to obtain the following:

\begin{posttheorem}[$\boldsymbol{\widehat{d}_{tr}}$ small $\boldsymbol{\xxRightarrow{U} \widehat{d}_\diamondsuit}$ small]\label{lem:diam_post}
    Let $U:\h\to \h'$ be an isometry and $\Psi: L(\h)\to L(\h')$ a postselection superoperator. Assume $\widehat{d}_{tr}(\Psi, U\cdot U^\dagger) \leq \epsilon$. Then ${\widehat{d}_\diamondsuit(\Psi, U\cdot U^\dagger)\leq 24\sqrt{\epsilon}+18\epsilon}$.
\end{posttheorem}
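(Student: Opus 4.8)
The plan is to reduce this postselected statement to its standard-setting analogue, \Cref{lem:diam}, by sandwiching the whole argument between the two directions of the Conversion \namecref{lem:conversion}. The comparison map $U\cdot U^\dagger$ is trace-preserving, since $U^\dagger U=\id$, so \Cref{lem:conversion} applies with the trace-preserving map in the role of $\Phi$ and supplies a positive scalar $k$ for which $\widehat{d}_{tr}(\Psi, U\cdot U^\dagger)$ and $d_{tr}^\mathcal{D}(\tfrac{\Psi}{k}, U\cdot U^\dagger)$ are equivalent up to constants. Feeding the hypothesis $\widehat{d}_{tr}(\Psi, U\cdot U^\dagger)\leq\epsilon$ through the ``forward'' direction, I would obtain a standard-setting bound of the shape $d_{tr}^\mathcal{D}(\tfrac{\Psi}{k}, U\cdot U^\dagger)\leq 9\epsilon$. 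For this to be meaningful I must check that $\tfrac{\Psi}{k}$ is a legitimate input to \Cref{lem:diam}, i.e. linear, CP and trace-nonincreasing; the first two are automatic and the last holds as soon as $k\geq\sup_\rho\tr[\Psi(\rho)]$, a property I expect the $k$ of \Cref{lem:conversion} to satisfy (and which can only be helped by enlarging $k$).

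Next I would invoke \Cref{lem:diam} on the pair $(\tfrac{\Psi}{k},\,U\cdot U^\dagger)$, which upgrades the $d_{tr}^\mathcal{D}$-bound to a diamond bound $d_\diamondsuit(\tfrac{\Psi}{k}, U\cdot U^\dagger)\leq 4\sqrt{9\epsilon}+9\epsilon=12\sqrt{\epsilon}+9\epsilon$. On the postselected side, \Cref{lem:diam_post_max} lets me collapse the supremum in definition \eqref{eq:def_diam_post} to the single instance $\widehat{d}_\diamondsuit(\Psi, U\cdot U^\dagger)=\widehat{d}_{tr}(\Psi\otimes\mathcal{I}_\h,\,(U\cdot U^\dagger)\otimes\mathcal{I}_\h)$. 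I would then convert this single instance back to the standard setting using the ``reverse'' direction of \Cref{lem:conversion}, applied now to the postselection superoperator $\Psi\otimes\mathcal{I}_\h$ against the still trace-preserving $(U\cdot U^\dagger)\otimes\mathcal{I}_\h=(U\otimes\id)\cdot(U\otimes\id)^\dagger$, giving $\widehat{d}_{tr}(\Psi\otimes\mathcal{I}_\h,\,(U\cdot U^\dagger)\otimes\mathcal{I}_\h)\leq 2\,d_{tr}^\mathcal{D}(\tfrac{\Psi}{k}\otimes\mathcal{I}_\h,\,(U\cdot U^\dagger)\otimes\mathcal{I}_\h)$. Since $d_\diamondsuit$ is by definition \eqref{eq:def_diam} the supremum of $d_{tr}^\mathcal{D}$ over all ancillas, the right-hand $d_{tr}^\mathcal{D}$ is at most $d_\diamondsuit(\tfrac{\Psi}{k}, U\cdot U^\dagger)\leq 12\sqrt{\epsilon}+9\epsilon$, and the factor $2$ yields exactly $24\sqrt{\epsilon}+18\epsilon$.

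The step I expect to be the crux is applying \Cref{lem:conversion} in the reverse direction \emph{with the same scalar $k$} to the tensored map $\Psi\otimes\mathcal{I}_\h$. The justification is that the postselection-probability function is invariant under tensoring with the identity: for $\rho\in\mathcal{D}(\h\otimes\h)$ one has $\tr[(\Psi\otimes\mathcal{I}_\h)(\rho)]=\tr[\Psi(\tr_\h\rho)]$, so as $\rho$ ranges over $\mathcal{D}(\h\otimes\h)$ the reduced state $\tr_\h\rho$ ranges over all of $\mathcal{D}(\h)$ and the postselection probabilities of $\Psi\otimes\mathcal{I}_\h$ take exactly the same values as those of $\Psi$. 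Consequently the ``almost constant postselection probability'' that \Cref{lem:conversion} certifies for $\Psi$ is inherited verbatim by $\Psi\otimes\mathcal{I}_\h$, and the very same $k$ is admissible. I would state and verify this invariance explicitly before invoking the reverse conversion, as it is what makes the diamond-level argument go through without having to re-run \Cref{lem:conversion} with an ancilla-dependent scalar.

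The remaining work is purely bookkeeping of constants: the advertised $24\sqrt{\epsilon}+18\epsilon$ is obtained only if the forward and backward equivalences of \Cref{lem:conversion} contribute the factors $9$ and $2$ that I used above, combined with the $4$ and $1$ coming from \Cref{lem:diam}. Pinning these down requires tracking the explicit constants proven in \Cref{lem:conversion} rather than the schematic ``up to constant factors'', but no new idea is needed beyond the chain \emph{forward convert $\to$ apply \Cref{lem:diam} $\to$ reduce via \Cref{lem:diam_post_max} $\to$ backward convert}.
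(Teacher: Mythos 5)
Your proof is correct and follows essentially the same route as the paper's: forward-convert via the right inequality of \Cref{lem:conversion} (with $k=\left|\!\left|\Psi\right|\!\right|_\diamondsuit$ and $\alpha=8$, giving $D\leq 9\epsilon$), apply \Cref{lem:diam}, and back-convert the resulting diamond bound through the left inequality of \Cref{lem:conversion} lifted to the stabilized setting. The paper performs that lifting by simply noting that $\mathcal{I}_\K$ is unitary, that $\left|\!\left|\Psi\otimes\mathcal{I}_\K\right|\!\right|_\diamondsuit=\left|\!\left|\Psi\right|\!\right|_\diamondsuit$, and that $\sup_\K$ preserves inequalities (and the left inequality in fact holds for any $k\neq 0$), rather than routing through \Cref{lem:diam_post_max}; this is the same argument you spell out.
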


\begin{posttheorem}[$\boldsymbol{\widehat{d}_{tr}}$ small $\boldsymbol{\xxRightarrow{U} \left|\!\left|\cdot\right|\!\right|_{op}}$ small]\label{lem:op_post}
Let ${U:\h\to \h'}$ be an isometry, $\Psi: L(\h)\to L(\h')$ a postselection superoperator. Let $A:\h\to\h'\otimes\K'$ be a Stinespring-dilation operator of $\Psi$. 
If $\widehat{d}_{tr}(\Psi, U\cdot U^\dagger) \leq \epsilon$, then there exists $\ket{g}\in\K'$, $(1-9\epsilon)\left|\!\left|A\right|\!\right|^2_{op}\leq \left|\!\left|\ket{g}\right|\!\right|^2\leq \left|\!\left|A\right|\!\right|^2_{op}$, such that $\left|\!\left|A-U\otimes \ket{g}\right|\!\right|_{op}\leq 6\left|\!\left|A\right|\!\right|_{op}\sqrt {\epsilon}$.
\end{posttheorem}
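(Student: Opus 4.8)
The plan is to linearise the renormalised distance with the Conversion \namecref{lem:conversion} and then invoke the standard-setting \Cref{lem:op}. First observe that the reference map $U\cdot U^\dagger$ is trace-preserving, since $\tr[U\rho U^\dagger]=\tr[\rho\,U^\dagger U]=\tr[\rho]$ as $U^\dagger U=\id_\h$; hence for every $\rho\in\mathcal{D}(\h)$ the objective simplifies to $\left|\!\left|\Psi(\rho)/p(\rho)-U\rho U^\dagger\right|\!\right|_{tr}$, where $p(\rho):=\tr[\Psi(\rho)]>0$. Let $k:=\sup_{\rho\in\mathcal{D}(\h)}p(\rho)$; because $\Psi(\rho)\geq 0$ we have $\sup_\rho p(\rho)=\left|\!\left|\Psi\right|\!\right|_{tr}^\mathcal{D}$, which by \Cref{lem:norms} equals $\left|\!\left|A\right|\!\right|_{op}^2$, and this is strictly positive as $\Psi$ is a postselection superoperator. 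Writing $\Psi(\rho)/k=(p(\rho)/k)\,\Psi(\rho)/p(\rho)$ and using $\left|\!\left|U\rho U^\dagger\right|\!\right|_{tr}=1$ together with $p(\rho)/k\leq 1$, the triangle inequality gives $\left|\!\left|\Psi(\rho)/k-U\rho U^\dagger\right|\!\right|_{tr}\leq (p(\rho)/k)\,\epsilon+(1-p(\rho)/k)$, so that $d_{tr}^\mathcal{D}(\Psi/k, U\cdot U^\dagger)\leq \epsilon+\sup_\rho\left(1-p(\rho)/k\right)$. The whole difficulty is thus concentrated in bounding the variation of the postselection probability below its maximum $k$, and this is exactly what the Conversion \namecref{lem:conversion} delivers: it certifies $\sup_\rho(1-p(\rho)/k)\leq 8\epsilon$, hence $d_{tr}^\mathcal{D}(\Psi/k, U\cdot U^\dagger)\leq 9\epsilon$ (the constant $9$ is what later propagates to the $6$ and the $9$ in the statement).

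With the problem linearised, I apply \Cref{lem:op} to the rescaled map $\Psi/k$. By the choice $k=\sup_\rho p(\rho)$ we have $\tr[(\Psi/k)(\rho)]=p(\rho)/k\leq 1$ for all $\rho$, so $\Psi/k$ is a linear CP \emph{trace-nonincreasing} superoperator; moreover its Stinespring operator is $A/\sqrt{k}$, with $\left|\!\left|A/\sqrt{k}\right|\!\right|_{op}^2=\left|\!\left|A\right|\!\right|_{op}^2/k=1$. \Cref{lem:op} applied to $\Psi/k$, the isometry $U$, and the bound $d_{tr}^\mathcal{D}(\Psi/k, U\cdot U^\dagger)\leq 9\epsilon$ then yields a vector $\ket{h}\in\K'$ with $1-9\epsilon\leq\left|\!\left|\ket{h}\right|\!\right|^2\leq\left|\!\left|A/\sqrt{k}\right|\!\right|_{op}^2=1$ and $\left|\!\left|A/\sqrt{k}-U\otimes\ket{h}\right|\!\right|_{op}\leq 2\sqrt{9\epsilon}=6\sqrt{\epsilon}$.

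It remains to undo the scaling. Set $\ket{g}:=\sqrt{k}\,\ket{h}=\left|\!\left|A\right|\!\right|_{op}\ket{h}\in\K'$. Multiplying the operator-norm estimate by $\sqrt{k}=\left|\!\left|A\right|\!\right|_{op}$ gives $\left|\!\left|A-U\otimes\ket{g}\right|\!\right|_{op}=\left|\!\left|A\right|\!\right|_{op}\left|\!\left|A/\sqrt{k}-U\otimes\ket{h}\right|\!\right|_{op}\leq 6\left|\!\left|A\right|\!\right|_{op}\sqrt{\epsilon}$, while multiplying the norm bounds on $\ket{h}$ by $k=\left|\!\left|A\right|\!\right|_{op}^2$ gives $(1-9\epsilon)\left|\!\left|A\right|\!\right|_{op}^2\leq\left|\!\left|\ket{g}\right|\!\right|^2\leq\left|\!\left|A\right|\!\right|_{op}^2$. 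These are precisely the conclusions of \Cref{lem:op_post}.

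I expect the genuine obstacle to sit entirely in the conversion step, which I am invoking rather than re-deriving: the substantive claim is that a small renormalised distance to a trace-preserving reference forces the postselection probability to stay within an $O(\epsilon)$ factor of its maximum, so that dividing by the constant $k=\left|\!\left|A\right|\!\right|_{op}^2$ costs only a constant factor. Granting that, the reduction here is routine — one only has to check that this particular choice of $k$ makes $\Psi/k$ trace-nonincreasing (so \Cref{lem:op} applies verbatim with $\left|\!\left|A/\sqrt{k}\right|\!\right|_{op}=1$) and then track the constants through the single rescaling $\ket{g}=\sqrt{k}\,\ket{h}$.
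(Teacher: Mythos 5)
Your proposal is correct and follows essentially the same route as the paper: rescale $\Psi$ by $k=\left|\!\left|\Psi\right|\!\right|_\diamondsuit=\left|\!\left|A\right|\!\right|_{op}^2$, use \Cref{lem:conversion} to get $d_{tr}^\mathcal{D}(\Psi/k,\,U\cdot U^\dagger)\leq 9\epsilon$, apply \Cref{lem:op} to the dilation $A/\left|\!\left|A\right|\!\right|_{op}$, and set $\ket{g}=\left|\!\left|A\right|\!\right|_{op}\ket{h}$. The only cosmetic difference is that you re-derive the right-hand inequality of \eqref{eq:mainlemma} from the lemma's first conclusion via a one-line triangle inequality rather than citing it directly, which is exactly how the paper itself proves that inequality.
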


\noindent The conversion between the settings is done by the following \namecref{lem:conversion}, which we prove in the next section.

\begin{mylemma}[Conversion Lemma]\label{lem:conversion}
	Let $\Psi, \Phi: L(\h)\to L(\h')$ be postselection superoperators, with $\Phi$ trace-preserving. Then $\left|\tr\left[\Psi(\rho)\right]-\left|\!\left|\Psi\right|\!\right|_\diamondsuit\right|\leq \alpha \left|\!\left|\Psi\right|\!\right|_\diamondsuit\,\widehat{d}_{tr}\left(\Psi, \Phi\right)$ for all $\rho\in\mathcal{D}(\h)$ and
	\begin{equation}
	    \frac{1}{2}\widehat{d}_{tr}\!\left(\Psi, \Phi\right)\leq d_{tr}^\mathcal{D}\!\left(\frac{\Psi}{\left|\!\left|\Psi\right|\!\right|}_{\!\diamondsuit},\Phi\right)\leq \left(\alpha + 1\right)\widehat{d}_{tr}\!\left(\Psi, \Phi\right)\text{,}\label{eq:mainlemma}
	\end{equation}
	with $\alpha=40/\max_{\rho_0, \rho_1\in\mathcal{D}(\h)}\left|\!\left|\Phi(\rho_0)-\Phi(\rho_1)\right|\!\right|_{tr}$ in general but with $\alpha=8$ if $\Phi$ is also a unitary superoperator.
\end{mylemma}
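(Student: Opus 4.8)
The plan is to reduce the whole \namecref{lem:conversion} to a single quantitative fact: that the postselection probability $p(\rho):=\tr\left[\Psi(\rho)\right]$ is nearly constant, staying close to its maximum $P:=\left\|\Psi\right\|_\diamondsuit=\max_{\rho\in\mathcal{D}(\h)}\tr\left[\Psi(\rho)\right]$ (the identity $\left\|\Psi\right\|_\diamondsuit=\sup_\rho\tr\left[\Psi(\rho)\right]$ and attainment of the max both follow from \Cref{lem:norms} and positivity of $\Psi(\rho)$). Since $\Phi$ is trace-preserving, $\tr\left[\Phi(\rho)\right]=1$, so with $\epsilon:=\widehat{d}_{tr}(\Psi,\Phi)$ we have $\left\|\frac{\Psi(\rho)}{p(\rho)}-\Phi(\rho)\right\|_{tr}\le\epsilon$ for all $\rho$. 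The first inequality of the statement is exactly $P-p(\rho)\le\alpha P\epsilon$, and I would prove it first; the two-sided estimate \eqref{eq:mainlemma} then falls out.

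To see why $p$ cannot vary, fix inputs $\rho_0,\rho_1$ with probabilities $p_0,p_1$ and move along $\rho_s:=(1-s)\rho_0+s\rho_1$. By linearity $\Phi(\rho_s)$ is the convex combination of $\Phi(\rho_0),\Phi(\rho_1)$ with weight $s$, whereas the \emph{renormalised} output $\Psi(\rho_s)/p(\rho_s)$ is the convex combination of $\Psi(\rho_0)/p_0$ and $\Psi(\rho_1)/p_1$ with the \emph{distorted} weight $\lambda(s)=\frac{sp_1}{(1-s)p_0+sp_1}$ on the second term. Writing $\Psi(\rho_i)/p_i=\Phi(\rho_i)+E_i$ with $\left\|E_i\right\|_{tr}\le\epsilon$ and collecting terms, the difference $\frac{\Psi(\rho_s)}{p(\rho_s)}-\Phi(\rho_s)$ equals $(\lambda(s)-s)\left(\Phi(\rho_1)-\Phi(\rho_0)\right)+(1-\lambda)E_0+\lambda E_1$, so the reverse triangle inequality and the bound $\epsilon$ on the left give $|\lambda(s)-s|\,\left\|\Phi(\rho_1)-\Phi(\rho_0)\right\|_{tr}\le 2\epsilon$. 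Taking $s=\tfrac12$, where $\lambda-\tfrac12=\tfrac{p_1-p_0}{2(p_0+p_1)}$, this reads $|p_0-p_1|\le\frac{4\epsilon(p_0+p_1)}{\left\|\Phi(\rho_0)-\Phi(\rho_1)\right\|_{tr}}$: two probabilities are forced together in proportion to how far $\Phi$ separates the inputs.

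This is useful only when $\Phi(\rho_0),\Phi(\rho_1)$ are well separated, which is why $\alpha$ involves $D:=\max_{\rho_0,\rho_1}\left\|\Phi(\rho_0)-\Phi(\rho_1)\right\|_{tr}$. Choosing $(\rho_0,\rho_1)$ to realise $D$ gives $|p_0-p_1|\le 8\epsilon P/D$. For an arbitrary $\rho$, the triangle inequality forces $\left\|\Phi(\rho)-\Phi(\rho_j)\right\|_{tr}\ge D/2$ for some $j\in\{0,1\}$, so pairing $\rho$ with that $\rho_j$ yields $|p(\rho)-p_j|\le 16\epsilon P/D$. Applying this with $\rho$ the maximiser shows some $p_j\ge P(1-16\epsilon/D)$, hence both $p_0,p_1\ge P(1-24\epsilon/D)$; applying it again to a general $\rho$ gives $p(\rho)\ge\min(p_0,p_1)-16\epsilon P/D\ge P(1-40\epsilon/D)$, i.e.\ $\alpha=40/D$. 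When $\Phi=V\cdot V^\dagger$ is unitary the routing is unnecessary: $\left\|\Phi(\rho_0)-\Phi(\rho_1)\right\|_{tr}=\left\|\rho_0-\rho_1\right\|_{tr}$, and since $p(\rho)=\tr\left[\rho\,A^\dagger A\right]$ is linear its extremes are attained at orthogonal eigenvectors of $A^\dagger A$, a pair at trace distance $2$; feeding these into the displayed bound gives $P-p(\rho)\le 4\epsilon P$, comfortably within the claimed $\alpha=8$.

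Given the probability bound, \eqref{eq:mainlemma} is routine. For every $\rho$ the cross term is $\left\|\frac{\Psi(\rho)}{p(\rho)}-\frac{\Psi(\rho)}{P}\right\|_{tr}=\left\|\Psi(\rho)\right\|_{tr}\left|\tfrac{1}{p(\rho)}-\tfrac1P\right|=\frac{P-p(\rho)}{P}$, using $\left\|\Psi(\rho)\right\|_{tr}=p(\rho)$. For the upper bound, the triangle inequality gives $\left\|\frac{\Psi(\rho)}{P}-\Phi(\rho)\right\|_{tr}\le\frac{P-p(\rho)}{P}+\left\|\frac{\Psi(\rho)}{p(\rho)}-\Phi(\rho)\right\|_{tr}\le\alpha\epsilon+\epsilon$ by the probability bound, so $d_{tr}^\mathcal{D}(\tfrac{\Psi}{P},\Phi)\le(\alpha+1)\epsilon$. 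For the lower bound, since the trace norm dominates the trace, $\frac{P-p(\rho)}{P}=\left|\tr\left[\tfrac{\Psi(\rho)}{P}\right]-\tr\left[\Phi(\rho)\right]\right|\le\left\|\frac{\Psi(\rho)}{P}-\Phi(\rho)\right\|_{tr}\le d_{tr}^\mathcal{D}(\tfrac{\Psi}{P},\Phi)$, whence $\left\|\frac{\Psi(\rho)}{p(\rho)}-\Phi(\rho)\right\|_{tr}\le 2\,d_{tr}^\mathcal{D}(\tfrac{\Psi}{P},\Phi)$ for all $\rho$, giving $\tfrac12\widehat{d}_{tr}(\Psi,\Phi)\le d_{tr}^\mathcal{D}(\tfrac{\Psi}{P},\Phi)$. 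The main obstacle is the probability bound, specifically the realisation that $\widehat{d}_{tr}$ only controls $p$ up to the separating power $D$ of $\Phi$ — this is what necessitates the routing step and makes $\alpha$ scale like $1/D$, and what forces the trace-preserving hypothesis on $\Phi$.
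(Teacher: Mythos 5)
Your proof is correct, and its overall architecture coincides with the paper's: first establish that the postselection probability $\tr\left[\Psi(\rho)\right]$ is nearly constant, via a pairwise bound inversely proportional to $\|\Phi(\rho_0)-\Phi(\rho_1)\|_{tr}$, then route an arbitrary $\rho$ through the maximally separated pair to get $\alpha=40/D$, and finally obtain \eqref{eq:mainlemma} by the triangle inequality (your derivation of both sides of \eqref{eq:mainlemma}, including the use of $\|\Psi(\rho)\|_{tr}=\tr\left[\Psi(\rho)\right]$ and of trace-preservation of $\Phi$ for the lower bound, matches the paper's almost verbatim). The interesting divergence is in how the pairwise bound is extracted. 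The paper multiplies through by $\tr\left[\Psi(\rho_p)\right]$ and reads off the coefficient $C=\tr\left[\Psi(\Delta)\right]\Phi(\Delta)$ of a quadratic operator polynomial in the mixing parameter, isolating $\|C\|_{tr}$ by the triangle inequality at $p=1/2$. You instead observe that the renormalised output of a mixture is a convex combination of the renormalised outputs with the distorted weight $\lambda(s)$, and compare $\lambda(s)$ to $s$ directly; this is a cleaner computation that yields the marginally sharper $4\epsilon(p_0+p_1)$ in place of the paper's $8\epsilon\|\Psi\|_\diamondsuit$ (the two coincide after bounding $p_0+p_1\leq 2\|\Psi\|_\diamondsuit$, so the final constant $40/D$ is the same). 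Your unitary case is also genuinely different: the paper decomposes the completely mixed state into $\rho$ plus $d-1$ orthogonal pure states and chains through it to reach $\alpha=8$, whereas you exploit the linearity of $\rho\mapsto\tr\left[\rho A^\dagger A\right]$ to place its maximum and minimum at orthogonal eigenvectors of $A^\dagger A$ (a pair at trace distance $2$), landing at $\alpha=4\leq 8$ in one step. Both shortcuts are valid; the only point worth stating explicitly is the degenerate case where $A^\dagger A$ is a multiple of the identity, in which the postselection probability is already constant and the first claim is vacuous.
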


\noindent The inequalities of \Cref{lem:conversion} hold also for diamond distances. This is immediate from the fact that the identity superoperator $\mathcal{I}_\K$ is unitary, that $\left|\!\left|\Psi\otimes\mathcal{I}_\K\right|\!\right|_\diamondsuit=\left|\!\left|\Psi\right|\!\right|_\diamondsuit$, and that taking supremum ($\sup_\K$) preserves inequalities. Thus \Cref{lem:conversion} allows us to move from postselection distances to the corresponding standard distances of linear trace non-increasing CP maps (since $\left|\!\left|\Psi\right|\!\right|_\diamondsuit=\max_{\rho\in\mathcal{D}(\h)}\tr\left[\Psi(\rho)\right]$ the superoperator $\Psi/\left|\!\left|\Psi\right|\!\right|_\diamondsuit$ is indeed trace non-increasing). We are ready to prove the \namecref{lem:diam_post}s.

\begin{proof}[Proof of \Cref{lem:diam_post}] 
Using the shorthand $D:=d_{tr}^\mathcal{D}\left(\frac{\Psi}{\left|\!\left|\Psi\right|\!\right|}_{\!\scaleto{\diamondsuit}{5pt}},U\cdot U^\dagger\right)$ we claim that
\begin{equation*}
    \frac{1}{2}\widehat{d}_\diamondsuit(\Psi, U \cdot U^\dagger)\leq
    d_\diamondsuit\left(\frac{\Psi}{\left|\!\left|\Psi\right|\!\right|}_{\!\diamondsuit},U\cdot U^\dagger\right)\leq 4\sqrt{D}+D
\end{equation*}
The first inequality is from \Cref{lem:conversion} applied to diamond distances, the second from \Cref{lem:diam}. Substituting $D\leq 9\epsilon$, also from \Cref{lem:conversion}, completes the proof.
\end{proof}

\begin{proof}[Proof of \Cref{lem:op_post}] Take $D:=d_{tr}^\mathcal{D}\left(\Psi/\left|\!\left|\Psi\right|\!\right|_\diamondsuit,U\cdot U^\dagger\right)$. Since $\Psi(\cdot)=\tr_{\,\K'}\left[A\cdot A^\dagger\right]$ and $\left|\!\left|A\right|\!\right|^2_{op} = \left|\!\left|\Psi\right|\!\right|_\diamondsuit$ (\Cref{lem:norms}), $A/\left|\!\left|A\right|\!\right|_{op}$ is a Stinespring-dilation operator of $\Psi/\left|\!\left|\Psi\right|\!\right|_\diamondsuit$. Then by \Cref{lem:op} there exists $\ket{v}\in\K'$ such that
\begin{eqnarray*}
    &1-9\epsilon \leq 1-D\leq \left|\!\left|\ket{v}\right|\!\right|^2\leq 1\text{,}&\\
    &\left|\!\left|\frac{A}{\left|\!\left|A\right|\!\right|_{op}}-U\otimes\ket{v}\right|\!\right|
    \leq 2 \sqrt{D}\leq 6\sqrt{\epsilon} \text{,}&
\end{eqnarray*}

\noindent where we used $D\leq 9 \epsilon$ from \Cref{lem:conversion}. Taking $\ket{g}:=\left|\!\left|A\right|\!\right|_{op}\ket{v}$ completes the proof.
\end{proof}

\subsection{Proof of Conversion lemma}

In this section we prove our main result, \Cref{lem:conversion}. When $\Phi$ is not promised to be unitary, the inverse dependence of $\alpha$ on the maximum trace distance of $\Phi$'s outputs is necessary. Consider the postselection superoperators $\Phi(\rho)=\ket{\boldsymbol{0}}\!\!\bra{\boldsymbol{0}}\tr\rho$ and ${\Psi(\rho)=\frac{1}{2}\ket{\boldsymbol{0}}\!\!\bra{\boldsymbol{0}}\tr\rho +\frac{1}{2}\ket{\boldsymbol{0}}\!\!\bra{\boldsymbol{0}}\rho\ket{\boldsymbol{0}}\!\!\bra{\boldsymbol{0}}}$ and note that $\Phi$ is indeed trace-preserving and that $\left|\!\left|\Psi\right|\!\right|_\diamondsuit = 1$. We have \stackMath\stackon[-8.7pt]{d_{tr}}{\widehat{\phantom{d}}\,\,}$\left(\Psi, \Phi\right)=0$, but $d_{tr}^\mathcal{D}\left(\Psi,\Phi\right)=\frac{1}{2}$, so for the inequality (\ref{eq:mainlemma}) to hold we need $\alpha\to\infty$.

\begin{proof}[Proof of \Cref{lem:conversion} (the left inequality in (\ref{eq:mainlemma}))]
The inequality actually holds also when $\left|\!\left|\Psi\right|\!\right|_\diamondsuit$ is replaced by any $k\neq 0$. To prove this, observe that for any $\rho\in\mathcal{D}(\h)$, $\Psi(\rho)$ is positive semidefinite and we have ${\left|\!\left|\Psi(\rho)\right|\!\right|_{tr}=\tr\left[\Psi(\rho)\right]}$. Therefore for all $\rho\in\mathcal{D}(\h)$
\begin{eqnarray*}
    \left|\!\left|\frac{\Psi(\rho)}{\tr\left[\Psi(\rho)\right]}-\frac{\Psi(\rho)}{k}\right|\!\right|_{tr} 
	&=& \left|1-\frac{\tr\left[\Psi(\rho)\right]}{k}\right| \\
	&=& \left|\tr\left[\Phi(\rho)-\frac{\Psi(\rho)}{k}\right]\right|\\
	&\leq& \left|\!\left|\Phi(\rho)-\frac{\Psi(\rho)}{k}\right|\!\right|_{tr}\text{,}
\end{eqnarray*}

\noindent where we used the fact that $\Phi$ is trace-preserving. From the triangle inequality and the above we get
\begin{eqnarray*}
    \left|\!\left|\frac{\Psi(\rho)}{\tr\left[\Psi(\rho)\right]}-\Phi(\rho)\right|\!\right|_{tr} &\leq& \left|\!\left|\frac{\Psi(\rho)}
	{\tr\left[\Psi(\rho)\right]}
	-\frac{\Psi(\rho)}{k}\right|\!\right|_{tr}	+\left|\!\left|\frac{\Psi(\rho)}{k}
	-\Phi(\rho)\right|\!\right|_{tr}\\
	&\leq& 2\left|\!\left|\frac{\Psi(\rho)}{k}
	-\Phi(\rho)\right|\!\right|_{tr}\text{.}
\end{eqnarray*}

\noindent Taking $\sup_\rho$ gives $\widehat{d}_{tr}(\Psi, \Phi)\leq 2 \, d_{tr}^\mathcal{D}(\frac{\Psi}{k},\Phi)$ as required.
\end{proof}

\begin{proof}[Proof of \Cref{lem:conversion} (the rest)] Use the shorthand $\widehat{D}:=\widehat{d}_{tr}(\Psi, \Phi)$. We will first prove the bound $|\tr\left[\Psi(\rho)\right]-\left|\!\left|\Psi\right|\!\right|_\diamondsuit |\leq \alpha \left|\!\left|\Psi\right|\!\right|_\diamondsuit\, \widehat{D}$. For this purpose we will use the fact that $\rho \mapsto\Phi(\rho)/\tr\left[\Phi(\rho)\right]=\Phi(\rho)$ is linear in $\rho$. For any $\rho_0, \rho_1\in\mathcal{D}(\h)$ denote $\rho_p:= (1-p)\rho_0 + p\,\rho_1=\rho_0+p\,\Delta$ where $p\in [0, 1]$ and $\Delta:=\rho_1-\rho_0$. Since $\left|\!\left|\Psi\right|\!\right|_\diamondsuit = \left|\!\left|\Psi\right|\!\right|^\mathcal{D}_{tr}=\max_{\rho\in\mathcal{D}(\h)}\tr\left[\Psi(\rho)\right]$ (see \Cref{lem:norms}), we actually want to upper bound $|\tr\left[\Psi(\Delta)\right]|$. We have
\begin{eqnarray*}
	\left|\!\left|\frac{\Psi(\rho_p)}{\tr\left[\Psi(\rho_p)\right]}
	-\Phi(\rho_p)\right|\!\right|_{tr}  \!&\leq& \widehat{D}\\
	\frac{1}{\tr\left[\Psi(\rho_p)\right]}\Big|\!\Big|\Psi(\rho_0)+p\,\Psi(\Delta)
	-\left(\tr\left[\Psi(\rho_0)\right]
	+p\tr\left[\Psi(\Delta)\right]\right)\!
	\left(\Phi(\rho_0)
	+ p\,\Phi(\Delta)\right)\!\Big|\!\Big|_{tr}\!&\leq& \widehat{D}\\
	\Big|\!\Big|A_0 + Bp - Cp^2\Big|\!\Big|_{tr}\!&\leq& \left|\!\left|\Psi\right|\!\right|_\diamondsuit \widehat{D}\text{,}
\end{eqnarray*}
\noindent where we rewrote the left-hand side as a trace norm of a polynomial in $p\in[0,1]$ with the coefficients 
	$A_0:=\Psi(\rho_0)-\tr\left[\Psi(\rho_0)\right]
	\Phi(\rho_0)\label{eq:pto0}$, 
	$B:=\Psi(\Delta)-\tr\left[\Psi(\rho_0)\right]\Phi(\Delta)-\tr\left[\Psi(\Delta)\right]\Phi(\rho_0)$ 
	and $C\!:=\!\tr\left[\Psi(\Delta)\right]\Phi(\Delta)$. 
To show that $\left|\tr\left[\Psi(\Delta)\right]\right|$ is small we will bound $\left|\!\left|C\right|\!\right|_{tr}$. Define $A_1$ analogously to $A_0$ and note that $\left|\!\left|A_i\right|\!\right|_{tr}\leq \tr\left[\psi(\rho_i)\right]\widehat{D}\leq \left|\!\left|\Psi\right|\!\right|_\diamondsuit \,\widehat{D}$. Observe also that $B=A_1-A_0+C$. By the triangle inequality
\begin{eqnarray*}
	\left|\!\left|Cp-Cp^2\right|\!\right|_{tr}&\leq& \left|\!\left|A_0+(A_1-A_0+C)p-Cp^2\right|\!\right|_{tr}+\left|\!\left|A_0(1-p)\right|\!\right|_{tr}+\left|\!\left|A_1\,p\right|\!\right|_{tr}\\
	(p-p^2)\left|\!\left|C\right|\!\right|_{tr}&\leq& \left|\!\left|\Psi\right|\!\right|_\diamondsuit \widehat{D} + (1-p)\left|\!\left|\Psi\right|\!\right|_\diamondsuit \widehat{D} + p \left|\!\left|\Psi\right|\!\right|_\diamondsuit \widehat{D}\\
	\left|\!\left|C\right|\!\right|_{tr}&\leq& \tfrac{2}{(1-p)p} \left|\!\left|\Psi\right|\!\right|_\diamondsuit \widehat{D}\text{.}
\end{eqnarray*}

\noindent Taking $p=\frac{1}{2}$ so that the bound is the tightest and substituting $\Delta=\rho_1-\rho_0$ into the definition of $C$ we get that for all pairs $\rho_0,\rho_1\in \mathcal{D}(\h)$
\begin{eqnarray}
	\left|\tr\left[\Psi(\rho_1)\right]-\tr\left[\Psi(\rho_0)\right]\right|
	&\leq \frac{8 \left|\!\left|\Psi\right|\!\right|_\diamondsuit\,\widehat{D}}
	{\left|\!\left|\Phi(\rho_1)-\Phi(\rho_0)\right|\!\right|_{tr}}\text{.} \label{eq:pairbound}
\end{eqnarray}

If $\Phi$ is not unitary, denote by $\rho^*_0, \rho_1^*$ the two states that maximize the denominator, i.e. $\Phi(\rho^*_0)$ and $\Phi(\rho^*_1)$ are the furthest away in trace norm. Call their distance $s$. 
Now for any state $\rho\in\mathcal{D}(\h)$, $\Phi(\rho)$ is more than $\frac{s}{2}$-far either from $\Phi(\rho_0^*)$ or from $\Phi(\rho_1^*)$, or both. Therefore, for all pairs $\rho, \rho'\in\mathcal{D}(\h)$ there exist $i,j\in\{0, 1\}$ such that
\begin{eqnarray}
	\big|\tr\left[\Psi(\rho)\right]-\tr\left[\Psi(\rho')\right]\big| 
	&\leq& \left|\tr\left[\Psi(\rho)\right] -\tr\left[\Psi(\rho_{i}^*)\right]\right|	+ \left|\tr\left[\Psi(\rho_{i}^*)\right] -\tr\left[\Psi(\rho_{j}^*)\right]\right|+\left|\tr\left[\Psi(\rho_{j}^*)\right] -\tr\left[\Psi(\rho')\right]\right|	\nonumber\\
	&\leq& \tfrac{8\left|\!\left|\Psi \right|\!\right|_{\diamondsuit}\widehat{D}}{s/2} + \tfrac{8\left|\!\left|\Psi \right|\!\right|_{\diamondsuit}\widehat{D}}{s}  + \tfrac{8\left|\!\left|\Psi \right|\!\right|_{\diamondsuit}\widehat{D}}{s/2}\nonumber\\ &=&\frac{40}{s} \left|\!\left|\Psi \right|\!\right|_{\diamondsuit}\widehat{D}	\text{.} \label{eq:conversion_nonunitary}
\end{eqnarray}

If $\Phi$ is a unitary superoperator, we can obtain a tighter bound, because then for any orthogonal pure states $\rho, \rho^\perp\in\mathcal{D}(\h)$ we have $\left|\!\left|\Phi(\rho)-\Phi(\rho^\perp)\right|\!\right|_{tr}=\left|\!\left|\rho-\rho^\perp\right|\!\right|_{tr}=2$ and Eq. (\ref{eq:pairbound}) gives $\left|\tr\left[\Psi(\rho)\right]-\tr\left[\Psi(\rho^\perp)\right]\right|\leq 4 \left|\!\left|\Psi \right|\!\right|_{\diamondsuit}\widehat{D}$.
Now note that for any pure state $\rho\in\mathcal{D}(\h)$ we can write the completely mixed state as $\frac{1}{d}\id=\frac{1}{d}\rho +\frac{1}{d}\sum_{i=1}^{d-1}\rho_i^\perp$, with $d=\dim(\h)$ and with each $\rho^\perp_{i}$ pure and orthogonal to $\rho$. We get that
\begin{samepage}
\begin{eqnarray}
	 \left|\tr\left[\Psi(\rho)\right]
	 -\tr\left[\Psi\left(\tfrac{1}{d}\id\right)\right]\right|
	 &\leq& \tfrac{1}{d}\sum_{i=1}^{d-1}\left|\tr\left[\Psi(\rho)\right]
	 -\tr\big[\Psi(\rho_i^\perp)\big]\right|\nonumber\\
	 &\leq& 4\left|\!\left|\Psi \right|\!\right|_{\diamondsuit}\widehat{D}\label{eq:conversion_unitary}
\end{eqnarray}
holds for any pure $\rho$. By the convexity of $\mathcal{D}(\h)$, it in fact holds for any $\rho\in\mathcal{D}(\h)$.\\
\end{samepage}

To continue both the unitary and the non-unitary case, choose $\rho'$ that maximizes $\tr\left[\Psi(\rho')\right]$. Then for any  $\rho\in\mathcal{D}(\h)$
\begin{equation*}
     \left|\tr\left[\Psi(\rho)\right]-\left|\!\left|\Psi\right|\!\right|_\diamondsuit\right|
     =\left|\tr\left[\Psi(\rho)\right]-\tr\left[\Psi(\rho')\right]\right|
     \leq \alpha \left|\!\left|\Psi \right|\!\right|_{\diamondsuit} \widehat{D} 
\end{equation*}
with $\alpha$ as in the statement of \Cref{lem:conversion} (the non-unitary case follows from (\ref{eq:conversion_nonunitary}), the unitary from (\ref{eq:conversion_unitary}) and the triangle inequality). This proves the first conclusion of \Cref{lem:conversion}. To prove the right inequality in (\ref{eq:mainlemma}) note that $\left|\!\left|\Psi(\rho)\right|\!\right|_{tr}=\tr\left[\Psi(\rho)\right]$ from $\Psi$'s complete positivity, and therefore $\left|\!\left|\frac{\Psi(\rho)}{\left|\!\left|\Psi \right|\!\right|_{\diamondsuit}}
    -\frac{\Psi(\rho)}{\tr\left[\Psi(\rho)\right]}\right|\!\right|_{tr} = 
    \left|\frac{\tr\left[\Psi(\rho)\right]}{\left|\!\left|\Psi \right|\!\right|_{\diamondsuit}}
    -1\right| \leq \alpha \widehat{D}
$. We get that for all $\rho \in \mathcal{D}(\h)$
\begin{eqnarray*}
	\left|\!\left|\frac{\Psi(\rho)}{\left|\!\left|\Psi \right|\!\right|_{\diamondsuit}}-
	\Phi(\rho)\right|\!\right|_{tr}&\leq&
	\left|\!\left|\frac{\Psi(\rho)}{\left|\!\left|\Psi \right|\!\right|_{\diamondsuit}}-\frac{\Psi(\rho)}
	{\tr\left[\Psi(\rho)\right]}\right|\!\right|_{tr}+\left|\!\left|\frac{\Psi(\rho)}{\tr\left[\Psi(\rho)\right]}
	-\Phi(\rho)\right|\!\right|_{tr}\\
	&\leq& \alpha\widehat{D} + \widehat{D}\text{.} 
\end{eqnarray*}
\noindent Taking $\sup_\rho$ completes the proof.
\end{proof}

\section*{Acknowledgments}
The author would like to thank Dorit Aharonov, Itai Leigh, and Matan Seidel for useful discussions. This research was supported by Simons Foundation (grant 385590) and Israel Science Foundation (grants 2137/19 and 1721/17).

\bibliographystyle{unsrt}
\bibliography{main} 
\end{document}